\documentclass[]{llncs}
    
\usepackage[english]{babel}
\tolerance=5000
\hyphenpenalty=1000
\hyphenation{pseu-do-nyms}
\hyphenation{re-dact-able}
\hyphenation{eph-em-eral}
\hyphenation{block-chain}
\hyphenation{block-chains}
\hyphenation{cat-e-go-ry}

\usepackage{xurl}
\usepackage{mdframed}
\usepackage{comment}
\usepackage{color}
\usepackage{amsfonts}
\usepackage{algorithmic}
\usepackage{algorithm}
\usepackage{array}
\usepackage[caption=false,font=normalsize,labelfont=sf,textfont=sf]{subfig}
\usepackage{textcomp}
\usepackage{stfloats}
\usepackage{hyperref}
\usepackage{verbatim}
\usepackage{graphicx}

\newcommand{\ignore}[1]{}


\newcommand{\calZ}{\mathcal{Z}}

\newcommand{\bin}{\{0,1\}}
\newcommand{\secpar}{\lambda}
\newcommand{\sgn}{\sigma}
\newcommand{\set}{\Sigma}
\newcommand{\pk}{\mathit{pk}}
\newcommand{\sk}{\mathit{sk}}

\newcommand{\alg}[1]{\mathsf{#1}}

\newcommand{\Psc}{\Pi_{SC}}
\newcommand{\Paf}{\Pi_{FP}}
\newcommand{\Reg}{\mathit{Reg}}
\newcommand{\DevIss}{\mathit{D}}
\newcommand{\smartct}{\mathit{C_{SC}}}

\newcommand{\env}{\calZ}

\newcommand{\party}{\alg{P}}
\newcommand{\advA}{\alg{A}}
\newcommand{\advS}{\alg{S}}
\newcommand{\advD}{\alg{Z}}
\newcommand{\advC}{\mathsf{Chall}}

\newcommand{\initialized}{\mathtt{init}}
\newcommand{\init}{\mathtt{init}}

\newcommand{\fieldid}{\mathtt{areaId}}
\newcommand{\devid}{\mathtt{deviceId}}
\newcommand{\itemid}{\mathtt{itemId}}
\newcommand{\assetid}{\mathtt{assetId}}
\newcommand{\batchid}{\mathtt{batchId}}
\newcommand{\categid}{\mathtt{categId}}

\newcommand{\auddata}{\mathtt{auditData}}
\newcommand{\itm}{\mathtt{item}}
\newcommand{\items}{\mathtt{items}}
\newcommand{\products}{\mathtt{products}}
\newcommand{\p}{\mathtt{p}}
\newcommand{\newstate}{\mathtt{newstate}}

\newcommand{\tEvent}{\mathtt{tEvent}}
\newcommand{\aggregation}{\mathtt{aggregation}}
\newcommand{\disaggregation}{\mathtt{disaggregation}}
\newcommand{\cf}{\mathtt{categFP}}
\newcommand{\res}{\mathtt{result}}
\newcommand{\tx}{\mathtt{tx}}

\newcommand{\intact}{\mathtt{intact}}
\newcommand{\packaged}{\mathtt{packaged}}
\newcommand{\trans}{\mathtt{trans}}
\newcommand{\destroyed}{\mathtt{destroyed}}
\newcommand{\submit}{\mathtt{submit}}
\newcommand{\rd}{\mathtt{read}}

\newcommand{\reg}{\mathtt{register}}
\newcommand{\regok}{\mathtt{registered}}
\newcommand{\attr}{\mathtt{attr}}

\newcommand{\recvok}{\mathtt{received}}

\newcommand{\rejok}{\mathtt{rejected}}
\newcommand{\creat}{\mathtt{create}}

\newcommand{\transform}{\mathtt{transform}}
\newcommand{\training}{\mathtt{training}}

\newcommand{\merge}{\mathtt{merge}}

\newcommand{\dev}{\mathtt{device}}

\newcommand{\farm}{\mathtt{produce}}
\newcommand{\scsplit}{\mathtt{split}}
\newcommand{\upd}{\mathtt{update}}
\newcommand{\hand}{\mathtt{handover}}
\newcommand{\farmi}{\mathtt{producing}}
\newcommand{\creati}{\mathtt{creation}}
\newcommand{\transi}{\mathtt{transformation}}

\newcommand{\train}{\mathtt{training}}

\newcommand{\audit}{\mathtt{audit}}

\newcommand{\regsc}{\mathtt{register_{SC}}}
\newcommand{\regscok}{\mathtt{registered_{SC}}}
\newcommand{\farmsc}{\mathtt{produce_{SC}}}

\newcommand{\farmiscok}{\mathtt{produced_{SC}}}
\newcommand{\creatsc}{\mathtt{create_{SC}}}
\newcommand{\creatscok}{\mathtt{created_{SC}}}
\newcommand{\transsc}{\mathtt{transform_{SC}}}
\newcommand{\transscok}{\mathtt{transformed_{SC}}}
\newcommand{\trainsc}{\mathtt{training_{SC}}}

\newcommand{\trainscok}{\mathtt{trained_{SC}}}
\newcommand{\audsc}{\mathtt{audit_{SC}}}

\newcommand{\audscenabl}{\mathtt{enableaudit_{SC}}}
\newcommand{\mergsc}{\mathtt{merge_{SC}}}
\newcommand{\mergscok}{\mathtt{merged_{SC}}}
\newcommand{\splitsc}{\mathtt{split_{SC}}}
\newcommand{\splitscok}{\mathtt{splitted_{SC}}}
\newcommand{\handsc}{\mathtt{handover_{SC}}}
\newcommand{\handscok}{\mathtt{handoverEnded}}
\newcommand{\handscko}{\mathtt{handoverFailed}}
\newcommand{\recvsc}{\mathtt{receive_{SC}}}
\newcommand{\recvscok}{\mathtt{received_{SC}}}
\newcommand{\rejsc}{\mathtt{reject_{SC}}}
\newcommand{\rejscok}{\mathtt{rejected_{SC}}}
\newcommand{\updsc}{\mathtt{update_{SC}}}
\newcommand{\updscok}{\mathtt{updated_{SC}}}
\newcommand{\readsc}{\mathtt{read_{SC}}}

\newcommand{\initff}{\mathtt{init_{FF}}}

\newcommand{\initffok}{\mathtt{initialized_{FF}}}
\newcommand{\trainff}{\mathtt{train_{FF}}}
\newcommand{\trainffko}{\mathtt{trainingfailed_{FF}}}
\newcommand{\updffok}{\mathtt{updated_{FF}}}
\newcommand{\evalff}{\mathtt{eval_{FF}}}
\newcommand{\evalffko}{\mathtt{evalfailed_{FF}}}
\newcommand{\evalffok}{\mathtt{evalued_{FF}}}
\newcommand{\handff}{\mathtt{handover_{FF}}}

\newcommand{\recvffok}{\mathtt{received_{FF}}}

\newcommand{\readff}{\mathtt{read_{FF}}}

\newcommand{\deldev}{\mathtt{deleteDevice}}
\newcommand{\deldevok}{\mathtt{deletedDevice}}

\newcommand{\submitll}{\mathtt{submit_{L}}}
\newcommand{\readll}{\mathtt{read_{L}}}

\newcommand{\sendsmt}{\mathtt{send_{SMT}}}
\newcommand{\sendsmtok}{\mathtt{sent_{SMT}}}

\newcommand{\Func}{\ensuremath{\mathcal{F}}}
\newcommand{\Gfunc}{\ensuremath{\mathcal{G}}}
\newcommand{\Fsc}{\Func_{SC}}
\newcommand{\Fledg}{\Func_{Ledger}}
\newcommand{\Fsig}{\Func_{Sig}}
\newcommand{\Fsmt}{\Func_{SMT}}
\newcommand{\Fff}{\Func_{FP}}

\newcommand{\AFS}{FP scanner}
\newcommand{\AFSs}{\AFS s}

\newcommand{\rv}[1]{\mathbf{#1}}

\newcommand{\REAL}{\rv{REAL}}
\newcommand{\IDEAL}{\rv{IDEAL}}

\newcommand{\HYB}{\rv{H}}
\newcommand{\NN}{\mathbb{N}}
\newcommand{\cind}{\approx_c}

\begin{document}
\title{Secure Blockchain-Based Supply Chain Management with Verifiable Digital Twins\thanks{Part of the contributions presented in this work appeared in the paper ``Secure Blockchain-Based Supply Chain
Management with Verifiable Digital Twins'' published in proceedings of ``ACM 3rd International Conference on Information Technology for Social Good (GoodIT 2023)'',  pages 291–298, DOI: 10.1145/3582515.3609547 \cite{BFMV23}.}}

\author{Vincenzo Botta\inst{1} \and
Laura Fusco \inst{2} \and
Attilio Mondelli\inst{3}\and
Ivan Visconti\inst{2}
}

\institute{University of Warsaw, Warsaw, Poland \email{v.botta@uw.edu.pl} \and DIEM, University of Salerno, Italy \email{laurafusco1995@gmail.com,visconti@unisa.it}\and Farzati Tech., Italy \email{attilio.mondelli@farzatitech.it}}

\maketitle
\begin{abstract}
A major problem in blockchain-based supply chain management is the
potential unreliability of digital twins when considering digital representations
of physical goods. Indeed, the use of blockchain technology to trace goods is obviously ineffective
if there is no strong correspondence between what is physically exchanged
and the digital information that appears  
in blockchain transactions.

In this work, we propose a model for strengthening
the supply chain management of
physical goods by leveraging blockchain technology along with
a digital-twin verification feature. Our model can be instantiated
in various scenarios and we have in particular considered the
popular case of food traceability.  

In contrast to other models known in the literature that 
propose their own ad-hoc properties to assess
the robustness of their supply chain management systems,
in this work we use the formalism of secure computation, where processes
are described through generic and natural ideal functionalities.
\end{abstract}

\keywords{Blockchain \and Supply Chain\and Digital Twins\and Secure Computation}

\section{Introduction}\label{sec:intro}

The traceability of goods is extremely important to optimize their management
and to reduce risks of frauds and their consequences 
(e.g., in foodstuffs frauds damage economy and public health).
Mutually distrustful 
supply chain (SC) members are interested in using tracing systems 
since traceability provides transparency and helps to fight counterfeiting 
with positive effects on the reputation and business of the involved (honest) companies~\cite{VSMWV12}.

Traditional supply chain management systems are 
centralized and their credibility relies on the trust towards central administrators \cite{Tia16,ATM17}. This single point of failure due to centralized architectures is a classical issue that motivates the use of blockchain technology. 
Indeed, blockchains can help providing an immutable, consistent, transparent, decentralized and highly available ledger of supply chain data. 
While originally focusing on cryptocurrencies, blockchain technology is
nowadays mature enough to be used in other scenarios 
decentralizing classic centralized solutions.
 Leveraging on blockchain technology all events in a supply chain
can be digitized and stored in a transparent and immutable manner with robust
data availability. Indeed, current blockchain-based solutions provide the traceability of the supply chain 
by recording the status of a good at each stage of the supply chain. 
\paragraph{Achilles' heel: the validity of a digital twin.}
A supply chain management system provides ways to connect a virtual digital good to a concrete physical good~\cite{MDR21}.
Unfortunately, there is a major issue
to consider:
the problematic connection between 
data stored in a blockchain and the actual goods in the physical world \cite{PHJAB20}.
Consider for instance a farm producing mozzarella with milk obtained from Italian Mediterranean buffalo. 
A blockchain is useful to transparently give information on the location and size
of the farm, the number of buffalos and the amounts of buffalo mozzarella that is daily produced and
reach the stores  and the final consumers. The use of a blockchain allows to publicly verify that the farm does not 
claim unreasonably high amounts of produced buffalo mozzarella. While this tracing is beneficial for the entire system,
it still remains possible to dishonestly mix buffalo milk and cow milk whenever there is shortage of buffalo milk, due for instance to a virus affecting buffaloes. This is a typical case where while data appearing in the blockchain are consistent with expectations,
they can be completely uncorrelated to the physical goods to which they refer to. 

\subsection{Our Contribution}
In this paper, we present a general model for blockchain-based supply chain focusing on the traceability and the verification of goods. 

Our goal is to enable every party in the system
to retrieve full information about the life-cycle
of goods in the supply chain. Each 
physical good has a digital twin identified by a unique code 
that appears on a blockchain along with all the relevant events describing the life of the good (e.g., the event that generated it, the merging and splitting as a batch). A reliable auditing process is required in order
to strengthen consistency between the information stored on-chain and the physical goods, therefore reducing errors and frauds. Every time an inspection is carried out by a supply chain entity some relevant information (e.g.,  
date, time, place, operator, verified goods, the outcome of the inspection)
about the inspected good is recorded on the blockchain. 

We formalize the concept of reliable audit through a specific ideal functionality that can be instantiated in several
ways. In particular our model focuses on establishing how to declare a category of goods that will be involved in the supply chain
and then how to make sure that goods that are supposed to belong to a certain category can be tested in case of inspection.
Notice that defining this process is non-trivial given the evolving nature of goods during their life-cycles (e.g., the color,
hardness and flavor of a banana are expected to change day by day depending also on how they are stored).

When presenting our model we will be generic about the goods, but for concreteness we will sometimes
refer in particular to food traceability. Indeed, this use case
represents a popular example where the
use of blockchains has been widely considered and  bogus digital twins can severely affect
desired advantages provided by the publicly verifiable tracing.  

The formalism that we use to guarantee the robustness of the traceability provided by our model 
is the gold standard in the design of secure systems: secure multi-party computation (SMPC)~\cite{GMW87}. 
It consists of defining a natural ideal world representing how traceability can be reliably performed in the presence of trusted parties. Then, the reliability of 
a concrete real-world system can be measured by showing that any attack on the concrete system can be (through a simulation) carried also in the above ideal world. Since by inspection one can easily check the reliability of processes run in the ideal world, and since the real-world system is proven as secure as executions in the ideal world, one can conclude that the proposed  real-world system is reliable too. 

\section{Related Work}\label{sec:related_supply}
Our work mainly focuses on designing a framework that combines a permissioned blockchain together with a reliable auditing process that together makes possible the traceability and the verification of goods in a supply chain.

The concept of the digital twin was discussed by Grieves in \cite{gri14}. The main idea introduced by Grieves is to create an equivalent
digital representation of a physical object in
real life and to link the physical object with its representation. 
Grieves described the digital twin as composed of three
components: a physical product, a virtual representation of that
product, and the bi-directional data that connect the physical product and the virtual representation.
From this pioneering work, many researchers conducted studies to design digital twin systems that try to guarantee an equivalent digital representation of a manufacturing object. 

Many works have summarized and reviewed the characteristics of digital twin technologies and their application developed in the last decades.  Jones {\em et al.} in \cite{JONES202036} have outlined the identity core of the digital twin applications through a literature review. 
Deng {\em et al.} \cite{DENG2021125} reviewed works on digital twin applications in urban governance. 
Marmolejo-Suacedo {\em et al.} \cite{SaucedoHS19} presented a literature review of the concepts of digital twins in the supply chain and their limitations.
Defraeye {\em et al.} in \cite{DEFRAEYE2021245} identified key advantages of digital twins that can be exploited by the supply chain in the context of fresh horticultural products.
Rivera {\em et al.} in \cite{RMVT020} propose  GEMINIS, a model to specify the structural and behavioural aspects of digital twins. Yaqoob {\em et al.} in \cite{YSUJOI20} analyze the benefits of employing blockchain in systems to manage digital twins. However, their work lacks of formal analysis to assess the security of their application. Guo {\em et al.} in \cite{GLLAZRH20} analyze the problem of managing personalized services for products with a limited life-cycle. They focus on the possibility of designing a system that, with the aim of the blockchain, produces personalized manufacturing in which customers can participate and are involved in improving production. Lee {\em et al.} in \cite{LEE21} focus on accountable information sharing among participants in the digital twin system. The authors in particular highlight that the information in the digital twin must be tamper-proof and available in a traceable way to all participants. Their work however does not rigorously model the  auditing process that should guarantee the links between the digital and physical world.
In \cite{TVB21} authors present preliminary results for digital twins in the context of the food supply chain to assess the state-of-the-art of this technology.
Jabbar {\em et al.} in \cite{JLHAR21} proposed a survey to cover many aspects of supply chain management with blockchains. They analyze the technical and non-technical challenges to adopt blockchain for supply chain and propose a model called MOHBSChain, for the adoption of blockchains. Jabbar {\em et al.} identify blockchain as a promising technology for addressing secure traceability, data immutability and trust building. According to Jabbar {\em et al.} many companies like IBM\footnote{See \url{https://www.ibm.com/blockchain/supply-chain}.}, and Walmart\footnote{See \url{https://www.altoros.com/blog/blockchain-at-walmart-tracking-food-from-farm-to-fork/}.} are successfully using blockchain-based solutions to track products. The main drawbacks identified by the authors in using blockchain are related to scalability and blockchain interoperability.
Melesse {\em et al.} in \cite{MDR21} present a survey on digital-twin models. 
Their work identifies as one of the main problems in digital-twin models the inaccuracy in representing a physical good in a digital manner, reflecting the fact that the digital representation of a good or an asset is not always a realistic representation of the physical twin. They  discuss the use of blockchain technology as a tool to improve the effectiveness of traceability and the security of the process. Moreover, blockchains are considered by the authors a key solution to monitor a physical object from production to after sales.
Liu {\em et al.} in \cite{LYQG22}  show  how digital twins and blockchain technology together can revolutionise supply-chain management by improving the security and
efficiency of data processing, storage, and exchange. In their work Liu {\em et al.} show how blockchain technology  can be used to increase the reliability of digital twins through  security, traceability, and transparency of data. To show the actual state of the art, Liu {\em et al.} provide a comprehensive literature review for blockchain-based supply-chain management,  investigate the benefit obtained using blockchain in such processes, and give recommendations for future research directions. As future research opportunities, Liu {\em et al.} indicates the possibility of using permissioned blockchains like Hyperledger Fabric \cite{ABBCCC18} focusing also on the need of combining on-chain and off-chain data.

\ignore{
Recently Ehsan {\em et al.}\cite{EKRIAUA22} proposed a conceptual model for blockchain-based supply chain system for the agricultural sector. The model proposed by Ehsan {\em et al.} uses a permissioned blockchain to obtain a fully decentralized tracing of agricultural goods. Even if their system explains the advantages of using blockchains and smart contracts, they do not present a formalization of the functionalities that the supply chain model should satisfy and do not discuss the need for a validation procedure to check the consistency among the digital and the physical good.
}
\section{System Overview}
Our blockchain-based tracing system allows supply chain participants to record information on goods at each stage of the supply chain. Each physical good has a unique digital twin that is identified by an asset on the blockchain. The asset is associated to a current owner, a list of events and some auxiliary information. The owner of an asset
identifies the party that is supposed to own the physical good. The events represent the life-cycle of  the physical good and the auxiliary information gives additional information on the asset (e.g., location, type of good).

In our system, access-control policies are defined to prevent non-legitimate participants, including counterfeiters, from jeopardizing the ownership of assets. Moreover,  access-control policies guarantee that only legitimate manufacturers can claim the initial ownership of new assets that are introduced in the system, and that are supposed to correspond to physical goods. 
The creation of an asset is possible only if the party invoking this functionality has a legitimate role (e.g., 
it is enrolled in the system as a farmer).

In our system, if a party wants to check the authenticity of received physical goods with respect to their digital twins, the party can perform a verification procedure. We abstract\footnote{The abstraction
admits multiple real-world instantiations depending on the use case. For instance, one can refer
to an external auditor, or even skip this step if not essential.} this step
assuming the existence of a device that, after training over a given category of objects, is able to verify that a given object belongs to the specified category. For the scope of this paper, we call this device fingerprint (FP) scanner. An example of such device for the use case of food traceability can be a biological fingerprint scanner that is able to detect if a physical good belongs
or not to a specific food category (e.g., distinguishing buffalo mozzarella vs cow mozzarella); obviously this
requires an initial training to establish the biological fingerprint of a category and
how items in this category are supposed to change over time.
Therefore in our model the possibility of validating a digital twin affects
already the phase in which a category of goods is enabled in the system.
\ignore{

}
\subsection{System Set-Up and Assets}
Let us describe the initialization procedure and the list of assets managed by the system.

At first, a genesis block is created to fix some system parameters, access-control policies and system smart contracts. After the system initialization, supply chain participants (e.g., producers, distributors, certifiers) engage in an enrolling  protocol to get authorizations for their subsequent interactions. From now on, parties can record information in the system by submitting transactions.

In our system, we have five different types of assets:
\begin{itemize}
	\item Element: a good, that can be raw, such as olives, or the result of a transformation, such as olive oil. 
	\item Category: a category to which goods can belong, such as ``Oliva Ascolana del Piceno''.
	\item Batch: a batch of goods created aggregating other goods. 
	\item Production Area: a description of the location in which goods are produced.
	\item Device: an \AFS\ associated to a party.
\end{itemize}

\subsection{Entities}\label{sec:entities}
There are seven types of participants in our system.

\begin{itemize}
	\item Registration authorities.
	These are privileged parties that initialize the system and register other participants in the system with long-term credentials. The credentials links the real-world identity of the requester to his attributes and identifier. 
	We note that in a permissioned blockchain the initialization and the governance are usually managed by a group of organizations. The access control or the signing policy, (i.e., who can make what transactions in the network) and the logic of smart contracts are decided by these organizations. Members of the governance of the blockchain may overlap or not with the registration authorities. In the last case, the registration authorities decide the criteria for adding or removing parties and are responsible for authenticating, certifying, and registering network participants.
	\item \AFS\ issuers. \AFS\ issuers are the only parties capable of releasing new \AFSs\ to make sure that the system is not polluted by verification procedures that do not guarantee some desired level of quality
of service. A party that wants an \AFS\ has to send a request to an \AFS\ issuer. The registration 
authorities decide the criteria for adding or removing \AFS\ issuers. The set of authorized \AFS\ issuers must be publicly available to all parties. Using  authorized \AFS\ issuers enforces the parties to trust the functionality that
they implement.
	\item Producers. In the real world, a producer is an entity managing the production of raw materials. Accordingly, in our system, they are users authorized to introduce new items, that represent some real-world assets or to record a production area with the associated good. When a new item is created, some auxiliary information is also stored on-chain to link the physical good with its digital twin and to reflect its journey along the supply chain. Producers can introduce a new category and its benchmark, (i.e., physical properties). A party updating this information can for instance record the transfer of an item from one location to another or the item aggregation in a package. Producers can also perform all operations allowed to the role of supply chain members, which is described below.
	\item Manufacturers. Manufacturers produce goods for sale from raw materials. They are users authorized to introduce new items obtained from scratch (e.g., a manufacturer that obtains raw materials from a producer not registered to the supply chain system) or from the transformation of existing goods. They can introduce a new category and its benchmark. They can register the production area used to transform their goods. They can also perform all operations allowed to the role of supply chain members.
	\item Certifiers. Certifiers are entities responsible for certifying the authenticity of the physical good with respect to the information encoded on-chain. An audit recorded by a certifier is different from an audit recorded by other parties. Indeed, certification bodies are entities authorized by national accreditation bodies to conduct specific types of certification audits and establish certifications for qualified companies. The certifiers verify if the goods, the systems, the personnel, and the production processes of companies satisfy the certification requirements. 
	\item Consumers. Final consumers are interested in recovering the entire history of the purchased item. To keep the presentation simple, we have assumed that a consumer has only read access to supply chain information. A consumer might also be involved as an active participant in the supply chain by reporting feedbacks on the purchased items.
	\item Supply chain members. These entities are the other participants of the supply chain, (e.g., wholesaler, retailer, distributor, repackager). They are grouped into a single category since they can perform the same operations. These entities own items and give/receive items or aggregate/disaggregate them in larger/smaller packages or update the item's data.
\end{itemize}

\paragraph{Trust assumptions}\label{sec:threat}

To keep the presentation simple, we assume a single trusted registration authority. The registration authority $\Reg$ is trusted to assign correct credentials to all parties in the system and one unique identity per participant. The registration authority is trusted to verify the correctness of the attributes of a participant before enrolling the user in the system. $\Reg$ is also trusted in verifying the veracity of the role declared by parties. This authority can be obviously decentralized, as described in Section \ref{sec:improv}.

For simplicity, we assume a single \AFS\ issuer $\DevIss$, (i.e., the producer of the devices). We assume that $\DevIss$ issues only trusted devices, in the sense that these devices are well-formed and implement the functionality correctly. This trust assumption can be relaxed, as described in Section \ref{sec:improv}. 

Producers and Manufacturers are trusted to introduce new categories and/or update existing categories.

We assume that the underlying blockchain is secure. It means that an adversary cannot compromise the data stored in the blockchain, avoid valid transactions to be appended to the ledger or make parties observe a fake ledger.

\ignore{
\subsection{Threat Model}\label{sec:threat}
The final goal of the adversary is to tamper with assets within the supply chain to obtain some benefits. The adversary can be a regular party of the system, hence the trust assumptions made for each entity are the following.
\begin{itemize}
	\item Registration authority. To keep the presentation simple, we assume a single trusted registration authority. The registration authority $\Reg$ is trusted to assign correct credentials to all parties in the system and one unique identity per participant. In the registration phase, a participant presents a set of attributes (e.g., credentials), including his public key, to the registration authority and receives in return a response with the outcome of the registration. The registration authority is trusted to verify the correctness of the attributes of a participant before enrolling the user in the system. $\Reg$ is also required to verify that the participant knows the secret key underlying the advertised public key and the identifier provided. $\Reg$ must also verify that the role declared by the parties is true. This authority can be obviously decentralized, as described in Section \ref{sec:improv}.
	\item 	\AFS\ issuer. For simplicity, we assume a single \AFS\ issuer $\DevIss$, (i.e., the producer of the devices). We assume that $\DevIss$ issues only trusted devices, in the sense that these devices are well-formed and implement the functionality correctly. Based on this assumption, the parties trust the functionality that is stored on the \AFS. This trust assumption can be relaxed, as described in Section \ref{sec:improv}. 
	\item Producers, Manufacturers and Supply chain members. They can threaten the system in different ways. A party might attempt to do the following actions.
	\begin{itemize}
		\item A party $\party$ can try to steal the ownership of a good belonging to another party.
		\item A party $\party$ can try to insert counterfeit goods into the supply chain and passing them off as original by matching them with an on-chain good that $\party$ possesses\footnote{It is not possible to the dishonest party to match a counterfeit good to an original good of another $\party$, but we do not treat this attack since it is equivalent to the one in which the adversary tries to assign counterfeit goods to himself.}. In this case $\party$ double-spends his goods. $\party$ succeeds only if it can transfer multiple times the same item on-chain.
		\item A party $\party$ can try to transfer goods to non-registered users (i.e., grey market). $\party$ can try to trade goods outside the authorized manufacturer's channel. Multinational brands often offer goods that are specifically designed and priced for certain markets. 
		\item A party $\party$ can try to tamper with data. For example, $\party$ can try to change the expiration date of an item to sell an expired good.
		\item A party $\party$ can try to record an audit result on an item that does not correspond to reality.
	\end{itemize}
	\item Producers and Manufacturers. They are trusted to introduce new categories since the training is supervised by the \AFS\ issuer. The presence of the \AFS\ issuer is necessary since if an error occurs in this step it might compromise the correct functioning of the scanners.
	\item Certifiers. Certifiers are not trusted when recording an audit, because a certifier could get corrupted during an inspection by a dishonest party. In this case, the certifier might try to record a result that does not correspond to reality. Finally, certifiers might also try to steal goods of others, double-spend their goods, transfer goods to non-registered users and tamper with data.
	\item Consumers. Since they have only read access, they cannot perform any operation, so they might try to obtain or generate valid credentials that satisfy the access control policies.
\end{itemize}

Each entity should not be able to perform actions on behalf of other parties or to which they are not authorized. To keep the presentation simple, we assume that the underlying blockchain is secure. It means that an adversary cannot compromise the data stored in the blockchain, avoid valid transactions to be appended to the ledger or make parties observe a fake ledger.
}

\section{Security Model}
To define security we follow the simulation paradigm considering ideal and real worlds. In particular we consider the notion of universally composable (UC) secure multi-party computation
 \cite{Canetti01}. We consider hybrid models in which  protocols access ideal functionalities to perform the computation. Informally, a protocol $\pi$ is executed in the $\Gfunc$-hybrid model if $\pi$ invokes the ideal functionality $\Gfunc$ as a subroutine.

Let $\Func$ be a functionality, and consider $n$ players $\party_1,\ldots,\party_n$ executing a protocol $\pi$ that implements $\Func$.

Intuitively, $\pi$ is secure if in an execution in the real world the adversary cannot cause more harm than an adversary in the ideal world in which a trusted party computes the functions of the ideal functionality $\Func$ on behalf of the players. Moreover, we require that this guarantee holds even if $\pi$ accesses a  functionality $\Gfunc$ as subroutine.

\paragraph{The real model}
In the real world, the protocol $\pi$ is run in the presence of an adversary $\advA$ and an environment $\env$.
At the outset, $\env$ chooses the inputs $(1^\secpar,x_i)$ for each honest player $\party_i$, and gives in input to $\advA$ an auxiliary input $z$ and inputs for corrupted parties. For simplicity, we only consider static corruptions (i.e., the environment decides who is corrupt at the beginning of the protocol).

The parties start running $\pi$, with the honest players $\party_i$ behaving as prescribed in the protocol, and with corrupted parties behaving arbitrarily driven by $\advA$.

At the end of the execution, $\env$ receives the output of honest and corrupted parties and outputs a bit. We call $\REAL_{\pi,\advA,\env}(\secpar)$ the random variable corresponding to $\env$'s guess.

To make the description of a protocol modular, the parties of the protocol in the real world can access an ideal functionality $\Gfunc$ as a subroutine. In this case, we say that $\pi$ is realized in the $\Gfunc$-hybrid model. 

\paragraph{The ideal model} 
In the ideal world, there exists a trusted party that executes the ideal functionality $\Func$ on behalf of a set of dummy players $(\party_i)_{i \in [n]}$. $\env$ chooses the inputs $(1^\secpar,x_i)$ for each honest player $\party_i$, and sends to an adversary $\advS$, referred to as the simulator, inputs of corrupted parties together with an auxiliary input $z$. Honest parties send their inputs to the trusted party, while the corrupted parties send an arbitrary input as specified by $\advS$. The trusted party executes the functionality and produces outputs to give to parties. 

Finally, $\advS$ computes an arbitrary function of the inputs and gives it to $\env$. $\env$ returns a bit. We denote by $\IDEAL_{\Func,\advS,\env}(\secpar)$ the random variable corresponding to $\env$'s guess.

\begin{definition}[UC-Secure MPC]\label{def:mpc}
	Let $\pi$ be an $n$-party protocol that implements a functionality $\Func$. We say that $\pi$ securely realizes $\Func$ in the $\Gfunc$-hybrid model in the presence of malicious adversaries if for every PPT adversary $\advA$ there exists a PPT simulator $\advS$ such that for every non-uniform PPT environment $\env$ the following holds:
	\[
	\left\{\REAL^{\Gfunc}_{\pi,\advA,\env}(\secpar)\right\}_{\secpar\in\NN} \cind \left\{\IDEAL_{\Func,\advS,\env}(\secpar)\right\}_{\secpar\in\NN}.
	\]
\end{definition}

\section{The Ideal Functionalities}\label{sec:ideal_funcs}

In this section, we define the ideal functionality realized by our system for product tracing $\Fsc$ and some other functionalities $\Fff$, $\Fledg$, $\Fsig$ and $\Fsmt$. $\Fff$ models the \AFS\ that parties can use to perform verification about the item’s quality. $\Fledg$ emulates an available transaction ledger that party can use to submit transactions or to read the current state. $\Fsig$ models a secure digital signature scheme. It allows each party to sign digital messages and verify digital signatures. Finally, $\Fsmt$ models a secure channel. This is required since our protocol, in some cases, requires a private and secure exchange of information between transacting parties.

\subsection{Supply-Chain Functionality \texorpdfstring{$\Fsc$}{Lg}}\label{subsec:FSCfunc}
$\Fsc$ models the functionality realized by our product tracing system. 

Since in the real world each supply-chain member, according to the assigned role in the supply chain, can perform different operations, $\Fsc$ requires a set of producers F, manufacturers M, certifiers C, and other SC-members O to register. For example, a manufacturer should be able to create products, but a certifier not. To perform writing operations a party $\party_i\in \left\lbrace F,M,C,O\right\rbrace $ has to be registered.

We require that:
\begin{itemize}
	\item a party $\party_i$ that wants to send a training on a product $\p$, will send to $\Fsc$ a value $\cf$ that represent the unique fingerprint of $\p$. $\cf$ can be computed in many ways, a possible way is using the $\Fff$ functionality presented below;
	\item a party $\party_i$ that wants to send the outcome of an audit of a product $\p$ to $\Fsc$ , would need only to send a
	value $\auddata$ to $\Fsc$ that represents data of the audit.  This can be computed leveraging another functionality $\Fff$ or in other ways.
\end{itemize}

Associated with each product there are the supply chain events just analyzed and a state. The state can take four different values:

\begin{itemize}
	\item $\intact$: indicates that the asset is ready for use.
	\item $\packaged$: indicates that the asset has been packaged and is not ready to be used individually. A disaggregation event must occur before becoming available again (every operation performed on an item in a batch has an impact on the entire batch, this is why we require that a disaggregation is performed before performing any operation on an item in a batch).
	\item $\trans$: indicates that a handover has started between two parties. 
	\item $\destroyed$: indicates that the asset is no longer available for future processing.
\end{itemize}

For simplicity we implicitly assume that any data sent to $\Fsc$ are verified through checks that can be trivially implemented by inspecting the information stored in $\Fsc$: 
\begin{itemize}
	\item the product stored follow rules, regulations, and standards;
	\item producers and manufacturers own the production area declared;
	\item the fingerprint $\cf$ produced during a training activity on a product can be updated only by the party that  created $\cf$;
	\item the identifier of categories, production areas, and products are unique;
	\item a producer or a manufacturer that produces a new product must be the owner of the production area of the product and must be the owner of the products at creation time;
	\item the parties perform operations only over items on which they have the rights to operate;
	\item aggregation and disaggregation operations can be performed only on owned products;
	\item the items to package in an aggregation operation must be intact;
	\item the disaggregation operation can be performed only on an intact batch and the products in the batch must be packaged;
	\item the update event of an item managed by the supply chain is performed only if the specific parameters can be updated by a party that requests the update;
	\item the party that performs an handover is the owner of the good to transfer and the good is intact;
	\item the party designated for an handover is authorized to receive the asset;
	\item the goods that parties transfer in an handover are recorded goods.
\end{itemize}

$\Fsc$ is parameterized by the algorithm  $Transform$. We do not describe formally the behavior, but the informal description follows: the algorithm $Transform$ transforms a product or a set of products in a new product. This algorithm is used by manufacturers to transform assets in a new asset. The state of assets used in the transformation is modified from $\intact$ to $\destroyed$.

We call $\cf$ the asset category fingerprint that uniquely identifies a type of asset.

The data stored by $\Fsc$ are associated to a timestamp, we call this timestamp $\tEvent$ and since $\Fsc$ is a trusted functionality, we assume that $\Fsc$ computes always the correct time.

$\Fsc$ models the sequence of events that occurs along a supply chain. Each event corresponds to an operation that a party can perform. The result of the operations are stored in a list $L$, initially empty. Before that any operation is stored, $\Fsc$ checks that data are consistent in the way specified before:

\begin{mdframed}
	\begin{center}
		\bf $\Fsc$ Ideal Functionality
	\end{center}
	$\Fsc$ stores an initially empty list $L$. Parties $\party_1,\ldots,\party_n$ that interacts with $\Fsc$ may be of type $F,M,C,O$ where $F,M,C,O$ represent producers, manufacturers, certifiers, and other SC-members. Let $\advA$ be the adversary, $\advA$ may be of type $F,M,C,O$.
	
	\begin{itemize}
		
		\item Registration.
		Whenever a party writes $(\regsc, \party_i)$, if $\party_i$ is unregistered, then $\Fsc$ marks $\party_i$ as registered and outputs $(\regscok, \party_i)$ to $\party_i$ and $\advA$. Else, $\Fsc$ sends $\bot$ to $\party_i$.
		\item Register production area. Whenever a party $\party_i\in \{F,M\}$ wants to record the place of production of a new product, writes $(\farmsc,\party_i,\fieldid,\\\categid)$. $\Fsc$ stores the following tuple $(\party_i,\fieldid,\allowbreak \categid,\farmi,\tEvent,\\\intact)$ in $L$. $\Fsc$ sends $(\farmiscok,\allowbreak\party_i,\fieldid,)$ to $\party_i$ and $\advA$.

		\item Creation. Whenever a party $\party_i\in \{F,M\}$ wants to create a new product, writes $(\creatsc,\allowbreak\party_i,\itemid,\allowbreak\fieldid)$. $\Fsc$ stores the following tuple $(\party_i,\itemid,\allowbreak\fieldid,\creati,\allowbreak\tEvent,\intact)$ in $L$. $\Fsc$ sends  $(\creatscok,\party_i,\itemid,\fieldid)$ to $\party_i$ and $\advA$.
		
		\item Transformation.	Whenever a party $\party_i\in M$ wants to create a new product instance through the combination of single or multiple products writes $(\transsc,\itemid,\allowbreak\party_i,\items,\categid)$, \\where $\items=[\itemid_1,\ldots,\itemid_l]$. $\Fsc$ checks that the state of each item is intact. $\Fsc$ runs $Transform(\allowbreak\items)$, storing in $L$ a new tuple for each $\itemid_i\in\items$ where the only changed value is the state that passes from $\intact$ to $\destroyed$. Finally, $\Fsc$ stores the following tuple $(\party_i,\itemid, \items,\categid, \transi,\\\tEvent,\intact)$ in $L$. $\Fsc$ sends $(\transscok,\allowbreak\party_i,\allowbreak\itemid,\categid,\items)$ to $\party_i$ and $\advA$.
			
		\item Training. Whenever a party $\party_i\in \{F,M\}$ wants to perform a training activity,  writes $(\trainsc,\party_i,\allowbreak\categid,\allowbreak\cf)$, $\Fsc$ stores the following tuple $(\party_i,\cf,\train,\allowbreak\tEvent,\allowbreak\categid)$ in $L$ and returns $(\trainscok,\allowbreak\party_i,\allowbreak\categid,\cf)$ to $\party_i$ and $\advA$.
		\item Audit. Whenever a party $\party_i\in\{F,M,C,O\}$  writes $(\audsc,\party_i,\itemid,\allowbreak\categid,\auddata)$, \\where $\auddata$ is a tuple containing the outcome of the audit and additional information on the audit, $\Fsc$ stores the following tuple $(\audit,\party_i,\tEvent,\allowbreak\auddata,\itemid,\\ \categid)$ in $L$ and returns $(\audsc,\party_i,\itemid\allowbreak,\auddata)$ to $\party_i$ and $\advA$.	
		\item Aggregation. Whenever a party $\party_i\in\{F,M,C,O\}$ wants to physically aggregate products together writes $(\mergsc,\party_i,\batchid,\products)$, where $\products=(\p_1,\allowbreak\ldots,\p_l)$ and $\p\in\{\itemid,\batchid\}$.
		$\Fsc$ checks that the state of each $\p_i$ is intact and stores in $L$ a new tuple for $\p_i$ where $\intact$ is replaced with $\packaged$. Finally, $\Fsc$ stores the following tuple $(\party_i,\batchid,\products\aggregation,\allowbreak\tEvent,\intact)$ in $L$. $\Fsc$ sends $(\mergscok,\party_i,\allowbreak\batchid,\products)$ to $\party_i$ and $\advA$.
	
		\item Disaggregation. 	Whenever a party $\party_i\in\{F,M,C,O\}$ writes $(\splitsc,\party_i,\allowbreak\batchid)$, $\Fsc$ recovers the list of products $products$ in $\batchid$. For each $\p_i$ in $\products$, $\Fsc$ checks that the state of each $\p_i$ is packaged and stores a new tuple where the only changed value is the state that passes from $\packaged$ to $\intact$. $\Fsc$ stores the tuple $(\party_i,\batchid,\allowbreak\products,\allowbreak\disaggregation,\tEvent,\allowbreak\destroyed)$. $\Fsc$ sends $\allowbreak(\splitscok,\allowbreak\party_i,\batchid)$ to $\party_i$ and $\advA$.
	
		\item Handover. The handover operation is divided in the following actions:
			\begin{itemize}
				\item Whenever a party $\party_i\in\{F,M,C,O\}$  writes $(\handsc,\p,\party_i,\party_j)$, where $\p\in\{\itemid,\allowbreak\batchid\}$, $\Fsc$ stores a new tuple associated with $\p$, where the state is changed from $\intact$ to $\trans$ and adds $\party_j$ in the tuple as new designated owner for $\p$. $\Fsc$ sends $(\handsc,\p,\party_i,\party_j)$ to $\party_i$, $\party_j$ and $\advA$ to indicate that a handover is in progress between $\party_i$ and $\party_j$.
				\item Whenever a party $\party_j\in\{F,M,C,O\}$ writes $(\recvsc,\p,\party_j, \party_i)$, where $\p\in\{\itemid,\allowbreak\batchid\}$, $\Fsc$ stores the new following tuple associated to $\p$ $(\party_j,\p,\allowbreak\handscok,\tEvent,\allowbreak\intact)$, where the owner is changed from $\party_i$ to $\party_j$, and the state is changed from $\trans$ to $\intact$. $\Fsc$ sends $(\recvscok,\p,\party_i,\allowbreak\party_j)$ to $\party_i$, $\party_j$ and $\advA$ to indicate that a handover occurred between $\party_i$ and $\party_j$.
				\item Whenever a party $\party_j\in\{F,M,C,O\}$ writes $(\rejsc,\p,\party_j,\party_i)$, where $\p\in\{\itemid,\allowbreak\batchid\}$, $\Fsc$ stores the following tuple associated to $\p$ $(\party_i,\p,\handscko,\allowbreak\tEvent,\intact)$. Finally, $\Fsc$ sends\\ $(\rejscok,\p,\party_i,\allowbreak\party_j)$ to $\party_i$, $\party_j$ and $\advA$ to indicate that a handover has failed between $\party_i$ and $\party_j$.
				
			\end{itemize}
		
		\item Update.
		Whenever a party $\party_i\in\{F,M,C,O\}$ writes $(\updsc,\party_i,\p,\\ \newstate)$, where $\p\in\{\itemid,\batchid,\fieldid\}$, and $\newstate$ is the update to the state of $\p$, $\Fsc$ stores a new tuple associated to $\p$ where the only change is the state that is updated with value $\newstate$. $\Fsc$ sends $(\updscok,\party_i,\p,\newstate)$ to $\party_i$ and $\advA$.
		
		\item Read.
		Whenever a party $\party_i\in\{F,M,C,O\}$ writes $(\readsc,\assetid)$, where $\assetid\in\allowbreak\{\itemid,\allowbreak\batchid,\allowbreak\fieldid,\allowbreak\categid\}$ from a party $\party_i$ or the adversary $\advA$, $\Fsc$ returns a list of all records stored associated to $\assetid$.
	\end{itemize}
\end{mdframed}

The Update process in $\Fsc$ is required since this process describes the update of a product instance. This operation can be performed only by the current owner of the product and only for specific fields of information. A particular case is the deletion of a product instance from the supply chain. A deletion event is essentially the inverse of a creation event: an asset is invalidated and can no longer be used. It can occur when products are sold to the end-users or there is a product recall or loss due to contamination or accident. In case of a product recall, the product will be picked up from the market by competent authorities. Finally, the product can be destroyed by competent authorities with an update event, which might provide also proof of destruction. By inspecting the item's history, it will be possible to monitor the entire recall process.

\subsection{AssetFingerPrint Scanner Functionality \texorpdfstring{$\Fff$}{Lg}}\label{subsec:FFFfunc}
In the following, we propose the functionality $\Fff$ that will model the physical \AFS.

We make the following trust assumptions on the scanner.
\begin{itemize}
	\item The parties can verify the authenticity of the hardware (e.g., the integrity and no-tampering of the hardware). This means that an adversary is unable to replace the \AFS\ with a malicious one.
	\item The \AFS\ can be uniquely identified.
	\item The output is linked to the \AFS\ used.
\end{itemize}

$\Fff$ handles the following operations.
\begin{itemize}
	\item Initialize a device.
	\item Perform the training operations to allow $\Fff$ to recognize a specific category of products.
	\item Verify that a given item belongs to the claimed category.
	\item Transfer a devise from party $\party_i$ to party $\party_j$.
	\item List all \AFSs\ issued to parties. 
\end{itemize}

Moreover, we require that the training and update of a training can be performed only by the party that creates the asset fingerprint $\cf$.

We assume that an adversary has access only to the \AFSs\ of the corrupted parties. Each new scanner has associated a tuple $(\party_i,\devid)$, indicating that the scanner belongs to $\party_i$ and is identified by $\devid$. The device can be transferred from a party to another. In this case we assume that once the handover happens, the owner of the device is updated. We assume that every time that operations of $\Fff$ are called, the caller is the owner of the device on which the operation is performed and in the case of an handover that the device is not jet engaged in another handover procedure.

The $\Fff$ functionality uses $Training$ and $Verify$ algorithms. $Training$ can be seen as a procedure able to extract a biochemical and molecular mapping of a category, a kind of fingerprint, based on sample products. We call the output of $Training$ algorithm $\cf$ and it is the specific fingerprint of the category on which the training was performed. This process is time-consuming and might make unusable the products on which it is performed. $Verify$ takes an input a physical product and, thanks to some evaluations, can verify if the product belongs to a category. $Verify$ has potentially a shorter process time than $Training$ and does not necessarily make unusable the products on which it is performed. Since the verification process requires that the function is executed on a physical good, we require that also in the ideal world, the query is performed sending in input the physical good, meaning that it is not possible to execute the $Verify$ algorithm, if there is no good to check.

\begin{mdframed}
	\begin{center}
		\bf $\Fff$ Ideal Functionality
	\end{center}
	$\Fff$ runs with parties $\party_1,\ldots,\party_n$ and adversary $\advA$. It begins with a list $L$ initialized to $\emptyset$.
	\begin{itemize}
		\item Init \AFS.
		Whenever a party $\party_i\in\{\DevIss\}$ writes $(\initff,\party_i)$, $\Fff$ draws $\devid\allowbreak\gets Sample(1^\secpar)$. $\Fff$ stores the following tuple in $L$: $(\party_i,\devid,)$ and sends $(\allowbreak\initffok,\devid)$ to $\party_i$ and $\advA$.

		\item Training. Whenever a party $\party_i$ writes $(\trainff,\allowbreak\categid)$, then:
			\begin{itemize}
				\item If the training is successful and:
				\begin{itemize}
					\item A tuple $(\categid,*)$ does not exist, $\Fff$ stores the following tuple $(\categid,\allowbreak\cf)$ in $L$, where $\cf\gets Training(1^\secpar,\categid)$ and sends $(\initffok,\categid,\allowbreak\cf)$ to party $\party_i$ and $\advA$;
					\item A tuple $(\categid,*)$ already exists, $\Fff$ updates the following tuple $(\categid,*)$ in $L$ to $(\categid,\cf)$, where $\cf\gets Training(1^\secpar,\allowbreak\categid)$ and sends $(\updffok,\categid,\cf)$ to party $\party_i$ and $\advA$;
				\end{itemize}
				\item If the training is not successful, $\Fff$ sends $(\trainffko,\allowbreak\categid)$ to party $\party_i$ and $\advA$.
			\end{itemize}
		\item Verify. Whenever a party $\party_i$ writes $(\evalff,\party_i,\itm,\allowbreak\categid)$, where $\itm$ is a physical product, then $\Fff$ runs $\res\gets Verify(1^\secpar,\itm,\allowbreak\categid)$ and:
		\begin{itemize}
			\item If the verification has produced a result, $\Fff$ sends $\allowbreak(\evalffok,\allowbreak\res,\allowbreak\devid,\allowbreak\itm,\allowbreak\categid)$ to party $\party_i$ and $\advA$.
			\item Otherwise, $\Fff$ sends $(\evalffko,\itm,\allowbreak\categid)$ to party $\party_i$ and $\advA$.
		\end{itemize}
		\item Handover. Whenever a party $\party_i$ writes $(\handff,\allowbreak\devid,\party_i,\party_j,)$, $\Fff$ stores the tuple $(\recvffok,\allowbreak\devid,\party_i,\party_j)$ in $L$ and adds the tuple $(\party_j,\allowbreak\devid)$ in $L$. $\Fff$ sends $(\recvffok,\devid,\allowbreak\party_i,\allowbreak\party_j)$ to $\party_i$, $\party_j$ and $\advA$ to indicate that a handover has occurred between $\party_i$ and $\party_j$.
		\item Device withdrawal.
		Whenever the party $\party_i\in\{\DevIss\}$ that creates a device $\devid$ writes  $(\deldev,\allowbreak\devid)$, $\Fff$ stores the tuple $(\deldevok,\allowbreak\party_i,\devid)$ in $L$ and sends $(\deldevok,\party_i,\devid)$ to $\party_i$ and $\advA$.
		\item Read.
		Upon input $\readff$ from a party $\party_i$ or $\advA$, $\Fff$ returns to the caller a list of all records stored in $L$, which matches parties to devices (e.g., of type $(\party_i,\devid,*)$).
	\end{itemize}
\end{mdframed}

\subsection{Ledger Functionality \texorpdfstring{$\Fledg$}{Lg}}\label{subsec:FLfunc}
$\Fledg$ provides the abstraction of a transaction ledger. 

We assume that every submitted transaction is validated before they are added to the list of transactions. The validation algorithm checks that the transaction are not wrongly formatted and that the transaction was not submitted in the past. The validation algorithm is required to guarantee the chain quality.

Anyone might request a read of the list $L$.  We assume that the ledger is timestamped, live and immutable: a valid transaction submitted to the ledger will eventually be included and cannot be deleted afterwards. In a nutshell, valid transactions are appended, final, and available to all parties \cite{ACCDET20}. In blockchains with forks, a transaction can be considered final when it has achieved sufficient confirmations or has reached a checkpoint. To achieve strong guarantees, a party might issue transactions relative to an account: a transaction might contain an identifier $\party$, which is the abstract identity that claims ownership of the transaction. We can represent this situation by having transactions $\tx$ be pairs $(\tx',\party)$ with the above meaning\cite{BMTZ17}. $\Fledg$ has a validation predicate. 

\begin{mdframed}
	\begin{center}
		\bf $\Fledg$ Ideal Functionality
	\end{center}
	$\Fledg$ runs with parties $\party_1,\ldots,\party_n$ and adversary $\advA$. $\Fledg$ stores an initially empty list $L$ of bit strings.
	\begin{itemize}
		\item Submitting a transaction. Upon receiving $I=(\submitll,\allowbreak\tx)$ from a party $\party$, $\Fledg$ appends $\tx$ to $L$ and returns $(\submit,\tx,\party)$ to $\party$. If $\party$ is corrupt, then $\Fledg$ returns $(\submit,\tx,\party)$ to $\advA$.
		\item Reading the state. Upon receiving $I=\readll$ from a party $\party$, $\Fledg$ returns $(\rd,L)$ to $\party$. If $\party$ is corrupt, then $\Fledg$ returns $(\rd,L)$ to $\advA$.
	\end{itemize}
\end{mdframed}

\subsection{Digital Signature Functionality \texorpdfstring{$\Fsig$}{Lg}}\label{subsec:FSIGfunc}

$\Fsig$ models a digital signature scheme and consequently is parameterized by three algorithms $KeyGen$, $Sign$ and $Verify$. We describe $\Fsig$ using the same syntax of \cite{ACCDET20}.

The signature scheme is unforgeability under
chosen-message attack\cite{GMR88}, meaning that given an oracle that generates valid signatures, it is infeasible for an efficient adversary to output a valid signature on a message that has not been queried to the oracle.

\begin{mdframed}
	\begin{center}
		\bf $\Fsig$ Ideal Functionality
	\end{center}
	$\Fsig$ requires $sid=(Signer,sid')$, where $Signer$ is the party identifier of the signer. Set $C$, initially empty, specifies the set of currently corrupted parties. The functionality keeps a set $\set$ of properly signed messages.
	\begin{itemize}
		\item Key generation. Upon the first activation from Signer, $\Fsig$ runs $(\pk,\sk)\gets KeyGen(\secpar)$, where $\secpar$ is obtained from the security parameter tape, and stores $(\pk,\sk)$.
		\item Send public key. Upon input $pubkey$ from a party $\party$, $\Fsig$ outputs $(pubkey,\allowbreak\pk)$ to $\party$.
		\item Sign. Upon input $(sign,m)$ from party $Signer$ with $m\in\bin^*$, $\Fsig$ computes $\sgn\gets sign(\sk,m)$. Finally, $\Fsig$ sets $\set\gets\set\cup\{m\}$ and outputs $\sgn$ to $Signer$.
		\item Verify.  Upon input $(verify,\pk',m',\sgn')$ from a party $\party$, $\Fsig$ computes $b\gets verify(\pk',\allowbreak m',\allowbreak\sgn')$. If $Signer\not\in C\land\pk=\pk'\land b\land m'\not\in\set$ then $\Fsig$ outputs $(\res,0)$ to $\party$. Else, $\Fsig$ outputs $(\res,b)$ to $\party$.
		\item Corruption.  Upon input $(corrupt,\party)$ from the adversary, $\Fsig$ sets $C\gets C \cup\{\party\}$. If $\party= Signer$, then $\Fsig$ additionally outputs $\sk$ to $\advA$.
	\end{itemize}
\end{mdframed}

\subsection{Secure Message Transmission Functionality \texorpdfstring{$\Fsmt$}{Lg}}\label{subsec:FSMTfunc}
$\Fsmt$ models a secure channel between a sender $S$ and a receiver $R$.  We describe $\Fsig$ using the same syntax of \cite{ACCDET20}.

\begin{mdframed}
	\begin{center}
		\bf $\Fsmt$ Ideal Functionality
	\end{center}
	$\Fsmt$ is for transmitting messages in a secure and private manner.
	\begin{itemize}
		\item Send message. Upon input $(\sendsmt,\party_j,m)$ from a party $\party_i$:
		\begin{itemize}
			\item If both $\party_i$ and $\party_j$ are honest, $\Fsmt$ provides a private delayed output $(\sendsmtok,\allowbreak\party_i,\allowbreak\party_j,m)$ to $\party_j$.
			\item If at least one of $\party_i$ and $\party_j$ is corrupt, $\Fsmt$ provides a public delayed output $(\sendsmtok,\allowbreak\party_i,\allowbreak\party_j,m)$ to $\advA$.
		\end{itemize} 
	\end{itemize}
\end{mdframed}

\section{The Real-World: Our System \texorpdfstring{$\Psc$}{Lg}}
This section describes how our system works in the real world. Our supply chain protocol $\Psc$ realizes the functionality $\Fsc$ and operates with functionalities $\Fledg$, $\Fsmt$, $\Fsig$ and $\Fff$.

    As explained in Section \ref{subsec:FLfunc}, $\Fledg$ models a transaction ledger. To achieve strong guarantees, a party issues transactions relative to an account. Signatures enter the picture at this level: an honest participant of the network will issue only signed transactions on the network. Each party $\party$ has a signature key pair $(\pk,\sk)$ obtained via $\Fsig$.

Whenever $\party$ is supposed to submit a transaction $\tx'$, $\party$ signs it and appends the signature $\sgn$ and its party identifier $\party$. We define the transaction format as $\tx=(\tx',\sgn,\party)$, where the first part is an arbitrary transaction. To verify the signature, it is important to provide the verification key. This information must be added in the transaction or must be possible to calculate it from the transaction data. We assume that the public key for the signature verification can be extracted from the party identifier $\party$. In the real world, any blockchain system that realizes $\Fledg$ can be used. For our purpose, a reasonable choice is Hyperledger Fabric\cite{ABBCCC18}, which offers a membership service infrastructure \footnote{See the Membership Service Providers (MSP) at \url{https://hyperledger-fabric.readthedocs.io/en/release-2.2/msp.html} (accessed 14 July 2022).} to grant long-term credentials to network participants and supports a pluggable transaction validation \footnote{See the Pluggable Transaction Endorsement and Validation at \url{https://hyperledger-fabric.readthedocs.io/en/release-2.2/pluggable_endorsement_and_validation.html} (accessed 14 July 2022).}. The functionality $\Fsmt$ can be implemented by establishing a secure communication channel between parties (e.g., via TLS protocol). $\Fsig$ can be accommodated by employing a digital scheme that secure realizes the functionality (e.g., ECDSA). In the real world, the functionality $\Fff$ is realized by physical \AFSs\ that are able to verify the originality of a product. We assume in $\Fff$ that the list of all allocated devices is available to all parties. This can be achieved in the real world by involving an \AFS\ issuer $\DevIss$. If a party wants to be provided with at least one \AFS\ to be able to perform quality checks on the items, the party has to request $\DevIss$. When $\DevIss$ issues a new \AFS, it makes publicly available which \AFS\ is associated with which party.

\subsection{The Device Protocol \texorpdfstring{$\Paf$}{Lg}}

The $\Fff$ functionality can be implemented for instance leveraging
a scanner provided by a device producer $\DevIss$. 

\subsubsection{Trust Assumptions on \AFS.}\label{subsec:trust_ass}
In the following we specify which assumption can be done on the \AFS.

Obviously an adversary can not be forced to use well formed
\AFSs\ and thus we have made the following trust assumptions on the \AFS\ in Section \ref{subsec:FFFfunc}:
\begin{itemize}
	\item Honest parties can verify the authenticity of the hardware (e.g., the integrity of its behavior). This means that an adversary is unable to undetectably replace the \AFS\ of a honest party with a malicious one.
	\item The output of a genuine \AFS\ can be linked to it.
	\item The adversary can not produce fake outputs on behalf of a genuine
\AFS.
\end{itemize}

We note that these assumptions might be reached in the real world with tamper-proof hardware. Tamper-proof hardware has many similarities with our case study: an honest party can create a hardware token implementing any desired polytime functionality and an adversary given the token can only observe the input/output of this token~\cite{Katz07}. In our case study, the hardware token must be completely tamper-proof if we want to guarantee the first assumption. Typically
(e.g., smart cards), tamper-proof hardware can also generate digital signatures
to certify the outputs that it produces.
 This makes a \AFS\ uniquely identifiable and capable of authenticating
its outputs.

A \AFS\ is required to support the following operations.
\begin{itemize}
	\item Training operation. Given a good, the device is able to extract the fingerprint of the good that is digitally stored. We call $\cf$ the digital description of the fingerprint.
	\item Verification. Given a physical good, and a fingerprint  description $\cf$, the device is able to check that the given good belongs to the category of goods with fingerprint $\cf$.
\end{itemize}

\subsubsection{\texorpdfstring{$\Paf$}{Lg} Description.}
We report the $\Paf$ protocol. The operation performed by the protocol are stored on a ledger. The protocol $\Paf$ is the following:
\begin{itemize}
	\item Once the device is created, $\DevIss$ submits the transaction $\tx=((\creat,\allowbreak\devid,\DevIss),\sgn,\allowbreak(\DevIss,\pk_\DevIss,\attr_\DevIss,\sgn_\DevIss))$ via $\Fledg$, to indicate that the device is created. To protect the validity of transactions, $\DevIss$ generates the signatures of the content of the transaction through invocation to $\Fsig$ and append the received signature together with user credentials to the transaction.
	\item Once the device is created, $\DevIss$ handover the device to the party $\party_i$ that need it. To record this operation $\DevIss$ submits the transaction $\tx=(	(\hand,\allowbreak\devid,\DevIss,\party_i),\sgn,(\DevIss,\pk_\DevIss,\allowbreak\attr_\DevIss,\sgn_\DevIss))$ which indicates that an \AFS\ has been sent to $\party_i$. $\DevIss$ generates $\sgn$ through invocation to $\Fsig$.
	\item Any party $\party_i$ that owns the device can decide to handover the device to a party $\party_j$. To record this operation $\party_i$ submit the transaction $\tx=(	(\hand,\allowbreak\devid,\party_i,\party_j),\sgn,\allowbreak(\party_i,\pk_{\party_i},\attr_{\party_i},\sgn_{\party_i}))$ which indicates that an \AFS\ has been sent to $\party_j$. $\party_i$ generates $\sgn$ through invocation to $\Fsig$.
	\item Any party $\party_i$ that owns the device can perform a training on a good to obtain the fingerprint $\cf$ of the category of the good. The device performs this operation using information of the good to analyze. To record this operation, $\party_i$ submits transaction $\tx=(	(\categid,\cf),\sgn,(\party_i,\pk_{\party_i},\allowbreak\attr_{\party_i},\sgn_{\party_i}))$. $\party_i$ generates $\sgn$ through invocation to $\Fsig$.
	\item Any party $\party_i$ that owns the device can perform a verification on a physical asset given the expected category identifier $\categid$ at which the asset belongs. The device performs this operation using information of the good to analyze. $\party_i$ reads the ledger to obtain the fingerprint $\cf$ associated to $\categid$ and performs the analysis with $\cf$. This operation returns a digital information $\auddata$.
	\item If the device issuer $\DevIss$ that issues a device $\devid$ needs to withdraw this device from the marked (i.e., this device is broken), $\DevIss$ submits the following transaction $\tx=((\deldev,\allowbreak\devid,\DevIss),\sgn,(\DevIss,\allowbreak\pk_\DevIss,\attr_\DevIss,\sgn_\DevIss))$ via $\Fledg$. This transaction indicates that the device cannot be used anymore by parties in the system.
	\item Every user can read the blockchain to obtain information about devices and asset categories stored.
\end{itemize}

We have the following assumption: 
\begin{itemize}
	\item the creation of devices is always performed honestly and no device issuer misbehaves;
	\item the device withdrawal of a device is always performed honestly by the device issuer that issues the device;
	\item the training process is always performed honestly since the process is supervised by an honest controller.
\end{itemize}

The only actions that an adversary $\advA$ can try to perform in the real word is to impersonate a device issuer to issue misbehaving devices or set to $\deldevok$ devices that have not really been withdrawn. This attack cannot be performed except that $\advA$ is able to break the unforgeability of the signature scheme.

\subsection{The Protocol \texorpdfstring{$\Psc$}{Lg}}
In the following we describe the protocol $\Psc$.

The building block of the protocol $\Psc$ is an SC-smart contract $\smartct$, that manages the supply chain activities.

Each asset in the system has a unique identifier. The identifier is used to capture important information on the movement of assets along the supply chain and might be of the form E2344..., where the first character identifies the asset's type, in this case, Element, and the remaining characters uniquely identify the asset among this type. The remaining characters may be obtained, for example, by randomly generating a value $r\gets\bin^\secpar$ that has to be unique. The identifier may also be obtained by using existing open standards such as the Electronic Product Code (EPC).

Access control policies must be defined to model the real-world constraint where each party, according to its role in the supply chain, can perform different operations on an asset. As explained in Section \ref{sec:entities}, we assume the existence of a single trusty registration authority $\Reg$ responsible of register other participants in the system with long-term credentials. $\Reg$ has a unique unforgeable identity that is available to all nodes in the network. At setup, $\Reg$ generates a key pair corresponding to the $\Reg$ identifier. The public information of $\Reg$ is stored in the genesis block. The smart contract and the access control policies realize our functionality $\Fsc$. They are initialized during the setup phase of the system by an authorized group member (i.e., registration authorities). In a nutshell, the protocol's initiation consists of the setup of the blockchain global parameters. The $\smartct$ handles all the transaction describes below and is executed by blockchain nodes. We assume that the smart contract correctly updates the state accordingly to the transaction retrieved by the ledger (i.e., is trusted to execute correctly). Before applying the state changes that come with the transaction itself, the smart contract performs various checks to be sure that the transaction satisfies the access control policies and the logic that the contract implements. In short, $\smartct$ verifies the authorization of the transaction submitter and stores authenticated and verified tracing information on a product. In our case study, for example, we have the following constraints.
\begin{itemize}
	\item Specific transactions can be performed only by parties with a particular role (i.e., access control policies).
	\item Any write operation on an item can be carried out only by the current owner.
	\item The product data must satisfy the specification of the validity of the product. So, $\smartct$ might perform automated regulatory-compliance verifications by checking that all the information required by regulation is provided.
	\item Specific operations can be performed only on specific types of asset (e.g., an aggregation can occur only for Batch or Element but not for Product Area or Category).
\end{itemize}

The protocol starts with parties' registration since registration is needed for all writing operations. When a participant sends a registration request with attributes certification to $\Reg$, the registration authority verifies the received information and registers the participant in the system with long-term credentials. Let us start by describing the protocol machine for $\Reg$. $\Reg$ has a bit $\initialized\gets false$ and an initially empty list $L$ of certified parties' credentials. We assume private channels between the registration authority and the parties.

\begin{itemize}
	\item Upon input $\init$, $\Reg$ checks if $\initialized$ is set. If set, $\Reg$ aborts. Else, $\Reg$ generates a signature key pair $(\pk_R,\sk_R )$ via invocations to $\Fsig$ and $\Reg$ stores, $\pk_R$, $\sk_R$. The public information of the registration authority, the identifier and the public key $\pk_R$ are stored in the genesis block of the ledger. $\Reg$ sets $\initialized\gets true$ and set the list of registered users $L$ to empty.
	\item Upon receiving $(\reg,\party_i,\pk_i,\attr_i)$ from $\party_i$ via $\Fsmt$, where $\attr_i$ are party's attributes,  $\Reg$ signs $(\party_i,\pk_i,\attr_i)$ and obtains the signature $\sgn_i$ that will be used by $\party_i$ together with $\pk_i,\attr_i$ as his credential for the system. The signature $\sgn_i$ is computed through invocation to $\Fsig$. $\Reg$ stores $(\party_i,\pk_i,\attr_i,\sgn_i)$ in $L$ and sends the message $(\regok,\allowbreak\party_i,\allowbreak\pk_i,\allowbreak\attr_i,\allowbreak\sgn_i)$ via $\Fsmt$ to $\party_i$.
\end{itemize}

The protocol for $\DevIss$ is the following. At setup, $\DevIss$ generates a key pair corresponding to its identifier and requests $\Reg$ for registration and attributes certification through a secure channel. If the authentication is successful, $\DevIss$ receives its party credentials $\sgn_\DevIss$ with which $\DevIss$ can issue new \AFSs\ in the system. We describe the protocol for a party $\DevIss$ of the system. $\DevIss$ has a bit $\regok\gets false$.

\begin{itemize}
	\item Upon input $\reg$, $\DevIss$ generates a signature key pair $(\pk_\DevIss,\sk_\DevIss)$ via invocations to $\Fsig$ and a unique identifier $\DevIss$. It sends a message $(\reg,\DevIss,\allowbreak\pk_\DevIss,\attr_\DevIss)$ to $\Reg$ via $\Fsmt$. If all steps succeeded, then $\DevIss$ sets $\regok\gets true$ and stores  $\DevIss$, $\pk_\DevIss$, $\sk_\DevIss$, $\attr_\DevIss$, $\sgn_\DevIss$. The registration phase is mandatory for each party that needs to submit transactions.
	\item $\DevIss$ creates new devices calling $\Fff$ with input $(\initff,\DevIss)$. The operation will produce a device identifier $\devid$. 
	\item If $\DevIss$ needs to withdraw a device produced by himself from the market (e.g., this specific device type is not working properly), $\DevIss$ calls $\Fff$ with input $(\deldev,\devid)$. This transaction indicates that an \AFS\ cannot be used anymore by parties in the system.
\end{itemize}

Let us now describe the protocol for a party $\party_i$ not already analyzed before. At the setup, $\party_i$ generates a key pair corresponding to its identifier and requests $\Reg$ for registration and attributes certification through a secure channel. If the authentication is successful, $\party_i$ receives its party credentials $\sgn_i$ with which $\party_i$ can perform operations in the system if it satisfies the access control policies. $\party_i$ has a bit $\regok\gets false$. \ignore{Every time that a party $\party_i$ submits a transaction $\tx$, the tuple $\tx'$ contains the following values:
\begin{enumerate}
	\item The operation that $\party_i$ wants to perform.
	\item The unique identifier of the item to create.
	\item The category to which the new item belongs. This parameter might be a meaningless value if the item's category is not defined.
	\item It is also possible for the party $\party_i$ that has submitted the transaction to specify, in case of a raw element (e.g., tomato, olive, etc.), the product area from which it is produced. In this way, a user can also offer full traceability on the production phases that led to that product. This parameter is the production area's unique identifier.
\end{enumerate}
}

$\Psc$ for a party $\party_i$ is the following.

\begin{itemize}
	\item Upon input $\reg$, $\party_i$ generates a signature key pair $(\pk_i,\sk_i)$ via invocations to $\Fsig$ and a unique identifier $\party_i$. It sends a message $(\reg,\party_i,\allowbreak\pk_i,\attr_i)$ to $\Reg$ via $\Fsmt$. The value $\attr_i$ contains the role of party together with other additional information. If all steps succeeded, then $\party_i$ sets $\regok\gets true$ and stores  $\party_i,\pk_i,\sk_i,\attr_i,\sgn_i$.
	\item When $\party_i$ wants to be provided with an \AFS, $\party_i$ sends the message $(\dev,\party_i)$ to $\DevIss$ via $\Fsmt$ and $\DevIss$ sends message $(\handff,\DevIss,\party_i)$ to $\Fff$.
	\item If $\party_i$ wants to create a new element, $\party_i$ creates the transaction $\tx'=\allowbreak(\creat,\allowbreak\itemid,\allowbreak\categid,\allowbreak\fieldid)$. After that, $\party_i$ can submit the transaction $\tx=\allowbreak(\tx',\allowbreak\sgn,\allowbreak(\party_i,\allowbreak\pk_i,\allowbreak\attr_i,\allowbreak\sgn_i))$ via $\Fledg$. The signature $\sgn$ of $\tx'$ is obtained through invocation to $\Fsig$.
	\item If $\party_i$ wants to record a new product area, it submits the transaction $\tx=\allowbreak((\farm,\allowbreak\fieldid),\allowbreak\sgn,\allowbreak(\party_i,\allowbreak\pk_i,\allowbreak\attr_i,\allowbreak\sgn_i))$. The signature $\sgn$ of $\tx'$ is obtained through invocation to $\Fsig$. 
	\item If $\party_i$ wants to create a new element, that is obtained by combining other stored items, it submits the transaction $\tx=(\allowbreak(\transform,\allowbreak\itemid,\allowbreak\items,\allowbreak\categid),\allowbreak\sgn,\allowbreak(\party_i,\allowbreak\pk_i,\allowbreak\attr_i,\allowbreak\sgn_i))$, where $\items$ is an array of item identifiers $\items=\left[(\itemid_1),\ldots,(\itemid_l)\right]$. The signature $\sgn$ of $\tx'$ is obtained through invocation to $\Fsig$.
	\item If $\party_i$ performs a training on a category $\categid$, $\party_i$ sends $(\trainff,\categid)$ to $\Fff$. If the training has returned a result $\cf$, $\party_i$ submits the transaction $\tx=(\allowbreak(\training,\allowbreak\categid,\allowbreak\cf),\allowbreak\sgn,\allowbreak(\party_i,\allowbreak\pk_i,\allowbreak\attr_i,\allowbreak\sgn_i))$. The signature $\sgn$ of $\tx'$ is obtained through invocation to $\Fsig$.
	\item Every time $\party_i$ inspects an item using an allocated \AFS\footnote{The \AFS\ has to be valid and not withdrawn by the market.}, $\party_i$ performs the following operation. First executes the verification sending $(\evalff,\party_i,\itemid,\allowbreak\categid)$ to $\Fff$. If the verification has returned a result $(\res,\devid,\itemid,\categid)$, submits the transaction $\tx=((\audit,\itemid,\auddata),\sgn,\allowbreak(\party_i,\pk_i,\attr_i,\sgn_i))$, where $\auddata=\allowbreak(\devid,\allowbreak\res,\itemid,\allowbreak\categid)$. Before accepting the transaction $\tx$, $\smartct$ query $\Fff$ to check that the device with identifier $\devid$ is valid and is not withdrawn. If during the verification phase has been chosen to return a more complex output, such as the value of the parameters evaluated, then it is possible to save on-chain this information for a more complete audit. If $\party_i$ is a Certifier, the audit recorded will be more accurate.
	\item If $\party_i$ wants to aggregate different stored items in a single package, it submits the transaction $\tx=((\merge,\batchid,\products),\sgn,(\party_i,\pk_i,\attr_i,\sgn_i))$, \\where products is an array containing the item identifier. The signature $\sgn$ of $\tx'$ is obtained through invocation to $\Fsig$.
	\item If $\party_i$ wants to disaggregate a package to retrieve stored items, it submits the  transaction $\tx=(\allowbreak(\scsplit,\allowbreak\batchid),\allowbreak\sgn,\allowbreak(\party_i,\pk_i,\attr_i,\sgn_i))$. The signature $\sgn$ of $\tx'$ is obtained through invocation to $\Fsig$.
	\item If $\party_i$ wants to update an asset's state, it submits the transaction $\tx=((\upd,\p,\allowbreak\newstate),\sgn,(\party_i,\pk_i,\allowbreak\attr_i,\sgn_i))$, where  $\newstate$ is the update of the state of the product that can be set from $\intact$ or $\packaged$ to $\destroyed$. The signature $\sgn$ of $\tx'$ is obtained through invocation to $\Fsig$.
	\item If $\party_i$ wants to transfer an asset to another party $\party_j$, it should do this by submitting the transaction $\tx=((\hand,\p,\party_i,\party_j),\sgn,(\party_i,\pk_i,\attr_i,\sgn_i ))$. If $\party_j$ wants to reject an asset from $\party_i$ that was sent to $\party_j$, $\party_j$ can do this by submitting the transaction $\tx=((\rejok,\p,\party_j,\party_i),\sgn,(\party_j,\pk_j,\attr_j,\sgn_j))$. If $\party_j$ wants to accept an asset from $\party_i$ that was sent to it, $\party_j$ can do this by submitting the transaction $\tx=((\recvok,\p,\party_j,\party_i),\sgn,(\party_j,\pk_j,\attr_j,\sgn_j))$. \ignore{If $\party_j$ does not receive the asset, $\party_j$ can reject it and the ownership of the good returns to $\party_i$.}
	\item The party $\party_i$ transmits only transactions. When $\party_i$ wants to retrieve an asset state, it relies upon blockchain consensus nodes, who process transactions. In this case, $\party_i$ sends the following request $(\readsc,\assetid)$ to one or multiple consensus nodes via $\Fsmt$ to obtain a list of all value changes of $\assetid$. As explained before, $\party_i$ can perform read operations even if is not registered in the system.
\end{itemize}

All data that a party $\party$ wants to transmit to $\Fledg$ are previously signed through invocations to $\Fsig$. The data, together with the signature and $\party$ credentials, are submitted to $\Fledg$. When creating a transaction, the data entry can be automated and speed up through IoT devices, QR code scanning, and so on. 

\subsection{Security of $\Psc$}\label{sec:security}


\begin{theorem}\label{thm:sec}
For any PPT adversaries $\advA$, there exists a PPT simulator $\advS$ such that for all non-uniform PPT distinguisher $\advD$, the protocol $\Psc$ securely computes $\Fsc$ in the $(\Fff,\Fledg,\Fsmt,\Fsig)$-hybrid model.
\end{theorem}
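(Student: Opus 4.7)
The plan is to construct, for every PPT real-world adversary $\advA$ in the $(\Fff,\Fledg,\Fsmt,\Fsig)$-hybrid execution of $\Psc$, a PPT simulator $\advS$ such that the joint output of the environment $\advD$ in the ideal execution with $\Fsc$ is computationally indistinguishable from its view of the hybrid execution. Since corruptions are static, $\advS$ knows upfront which parties are corrupt and can internally emulate the hybrid functionalities $\Fff$, $\Fledg$, $\Fsmt$ and $\Fsig$ toward $\advA$, translating each adversarial action into the corresponding command for $\Fsc$ and each output of $\Fsc$ into the messages and transactions that an honest $\party_i$ would have emitted in $\Psc$.

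First I would describe $\advS$ operation by operation. For every output that $\Fsc$ sends to $\advA$ on behalf of an honest party, $\advS$ runs the protocol steps that party would perform in $\Psc$: it invokes the simulated $\Fsig$ to produce a signature $\sgn$ on the relevant $\tx'$, appends the credential tuple $(\party_i,\pk_i,\attr_i,\sgn_i)$ fixed during a simulated registration with $\Reg$, and delivers the resulting transaction $\tx$ to the simulated $\Fledg$. Whenever $\advA$, controlling a corrupt $\party_j$, submits a transaction through the simulated $\Fledg$, $\advS$ runs exactly the validation logic of $\smartct$: it checks the credentials and signatures against $\Fsig$, the role encoded in $\attr_j$, the current ownership and state of the referenced assets, and the type-specific preconditions; if validation succeeds, $\advS$ sends the corresponding command (for example $\creatsc$, $\transsc$, $\handsc$, $\recvsc$, $\updsc$, or $\audsc$) to $\Fsc$ on behalf of $\party_j$, and otherwise discards the transaction as $\smartct$ would. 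Calls to $\Fsmt$ and $\Fff$ that involve corrupt parties are emulated directly; calls between two honest parties are simulated by leaking only the communication pattern, as $\Fsmt$ prescribes.

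Indistinguishability is argued through a short sequence of hybrids. Hybrid $\HYB_0$ is the real hybrid execution, and $\HYB_1$ replaces the external hybrid functionalities with identical internal copies inside $\advS$, which is a perfect transition. $\HYB_2$ replaces the honest parties' transactions with those that $\advS$ synthesizes from the $\Fsc$ outputs; since the text of each synthesized $\tx'$ is determined by $\Fsc$'s state exactly as it would be by the true protocol trace, the two views differ only if at some point $\advA$ manages to submit a transaction carrying a signature $\sgn$ relative to an honest $\party_i$ on a message that $\party_i$ never signed, or credentials $(\party_i,\pk_i,\attr_i,\sgn_i)$ that $\Reg$ never issued. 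Both events immediately yield a signature forgery against $\Fsig$, contradicting its unforgeability, so they occur with negligible probability. Conditioning on their non-occurrence, the access-control decisions of $\advS$ and of $\smartct$ on every submitted transaction coincide, and therefore the list of commands that $\Fsc$ records is identical to the list $\smartct$ would have accepted on the real ledger, making the two environment output distributions indistinguishable.

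The main obstacle I expect is verifying, operation by operation, that the consistency and access-control checks performed by $\smartct$ on a submitted transaction are semantically equivalent to the implicit preconditions baked into $\Fsc$ (ownership, the state transitions through $\intact$, $\packaged$, $\trans$, $\destroyed$, uniqueness of identifiers, the submitter's role lying in $\{F,M,C,\DevIss,O\}$, the validity of the $\Fff$ device used for an $\audsc$, and the two-phase nature of $\handsc$ followed by $\recvsc$ or $\rejsc$). Any asymmetry here would let a bogus transaction be accepted by $\smartct$ but refused by $\Fsc$, or vice versa, and would break the simulation. Closing this gap relies on the trust assumptions on $\Reg$ and $\DevIss$ stated earlier, on the honest supervision of the training step, and ultimately reduces the cryptographic content of the proof to the unforgeability of $\Fsig$ alone.
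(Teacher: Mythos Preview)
Your proposal is correct and follows essentially the same approach as the paper: construct a simulator that internally emulates $\Fff,\Fledg,\Fsmt,\Fsig$, translate adversarial transactions into $\Fsc$ commands and $\Fsc$ outputs into honest-party transactions, and reduce indistinguishability to the unforgeability guaranteed by $\Fsig$. The paper's hybrid sequence is organized slightly differently (four hybrids, walking from the ideal experiment toward the real one, with two separate unforgeability reductions---one for forging $\Reg$'s credential signature and one for forging an honest party's transaction signature---plus a final perfect step replacing the internal list by $\Fledg$), whereas you merge the two forgery cases into a single hybrid transition; but the cryptographic content is identical, and your explicit discussion of why the access-control checks in $\smartct$ must coincide with the implicit preconditions of $\Fsc$ is, if anything, more careful than the paper's treatment of that point.
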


\begin{proof}
We need to show that for every PPT adversary $\advA$, there exists a PPT simulator $\advS$ such that no non-uniform PPT distinguisher $\advD$ can tell apart the experiments \\$\REAL^{\Fff,\Fledg,\Fsmt,\Fsig}_{\Psc,\advA,\advD}(\lambda)$ and $\IDEAL_{\Fsc,\advS,\advD}(\lambda)$.

We define a simulator $\advS$ that interact with $\Fsc$. We consider only static corruptions, (i.e., the choice of which players to attack is made before the protocol starts). $\advS$ emulates functionalities $\Fledg$, $\Fsig$, $\Fsmt$ and $\Fff$. $\advS$ initially sets $\initialized\gets false$.

\begin{itemize}
	\item Simulating parties' registration. Upon receiving\\ $(\regscok,\party_i)$ from $\Fsc$, $\advS$ verifies that $\initialized$ is set to $true$. If the verification fails, $\advS$ first simulates the system initialization. $\advS$ creates and stores, as the real $\Reg$, a signature key pair $\pk_R^*,\sk_R^*$ generated using $\Fsig$. Finally, $\advS$ sets $\initialized\gets true$. Once that $\initialized\gets true$, if $\party_i$ is honest, $\advS$ generates a signature key pair $(\pk_i^*,\sk_i^*)$ via the simulated $\Fsig$ as the real $\party_i$ and marks $\party_i$ as registered. If $\party_i$ is corrupt, $\advS$ initializes $(\pk_i^*,\sk_i^*)$ with the received values. $\advS$ runs the $\Reg$ algorithm with input $(\reg,\party_i,\pk_i^*,\attr_i)$.
	In both cases, $\advS$, as the real $\Reg$, signs the message $(\party_i,\pk_i^*,\attr_i)$ via the simulated $\Fsig$ to obtain $\sgn_i^*$ and sends $\sgn_i^*$ to $\party_i$. $\advS$ marks $\party_i$ as registered and internally stores $(\party_i,(\pk_i^*,\sk_i^*),\sgn_i^*)$.
	
	\item Simulating honest regular parties. The actions of honest parties are simulated as follows.
	\begin{itemize}
		\item Upon input
		$(\creatscok,\party_i,\itemid,\categid,\\\fieldid)$ from $\Fsc$, $\advS$ creates $\tx=((\creat,\allowbreak\itemid,\allowbreak\categid,\fieldid),\sgn^*,\allowbreak(\party_i,\allowbreak\pk_i^*,\allowbreak\attr_i,\allowbreak\sgn_i^*))$ and submits it to the ledger using $\Fledg$ as the real $\party_i$, where $\sgn^*$ is the signature of $(\creat,\itemid,\allowbreak\categid,\fieldid)$ with key $\sk^*_i$ using $\Fsig$.
		
		\item Upon input $(\farmiscok,\party_i,\fieldid)$ from $\Fsc$, $\advS$ simulates for $\party_i$ the creation and submission of the equivalent transaction as seen before.
		\item Upon input $(\transscok,\party_i,\itemid,\categid,\allowbreak\items)$ from $\Fsc$, $\advS$ simulates for $\party_i$ the creation and submission of the equivalent transaction as seen before.
		\item Upon input $(\mergscok,\party_i,\batchid,\products)$ from $\Fsc$, $\advS$ simulates for $\party_i$ the creation and submission of the equivalent transaction as seen before.
		\item Upon input $(\splitscok,\party_i,\batchid)$ from $\Fsc$, $\advS$ simulates for $\party_i$ the creation and submission of the equivalent transaction as seen before.
		\item Upon input $(\handsc,\p,\party_i,\party_j)$ from $\Fsc$, $\advS$ simulates for $\party_i$ the creation and submission of the equivalent transaction as seen before.
		\item Upon input $(\rejsc,\p,\party_j,\party_i)$ from $\Fsc$, $\advS$ simulates for $\party_i$ the creation and submission of the equivalent transaction as seen before.		
		\item Upon input $(\recvscok,\p,\party_i,\allowbreak\party_j)$ from $\Fsc$, $\advS$ simulates for $\party_i$ the creation and submission of the equivalent transaction as seen before.
		\item Upon input $(\audsc,\party_i,\itemid,\auddata)$ from $\Fsc$. $\advS$ checks if $\party_i$ owns a device. If $\party_i$ does not own a device $\advS$ generates a new device calling $\Fff$ with input $(\initff,\DevIss)$. Once $\Fff$ returns $(\initffok,\devid)$, $\advS$ simulates the handover from $\DevIss$ to $\party_i$ sending message $(\handff,\DevIss,\party_i)$ to $\Fff$. After that these two operations are performed, $\advS$ submits the transaction $\tx=((\audit,\itemid,\allowbreak\auddata),\sgn,(\party_i,\pk_i,\attr_i,\sgn_i))$. $\sgn$ is the signature of $(\audit,\itemid,\auddata)$ with key $\sk_i$ using $\Fsig$.		
		\item Upon input  $(\trainscok,\party_i,\allowbreak\categid,\allowbreak\cf)$\\ from $\Fsc$, $\advS$ checks if $\party_i$ owns a device. If $\party_i$ does not own a device $\advS$ generates a new device calling $\Fff$ with input $(\initff,\DevIss)$. Once $\Fff$ returns $(\initffok,\allowbreak\devid)$, $\advS$ simulates the handover from $\DevIss$ to $\party_i$ sending message $(\handff,\DevIss,\party_i)$ to $\Fff$.  $\advS$ submits the transaction $\tx=((\training,\allowbreak\categid,\cf),\allowbreak\sgn,\allowbreak(\party_i,\allowbreak\pk_i,\allowbreak\attr_i,\allowbreak\sgn_i))$. $\sgn$ is the signature of $(\training,\categid,\cf)$ with key $\sk_i$ using $\Fsig$.
	\end{itemize}

	\item Simulating corrupted regular parties. Let $\advA$ be an adversary corrupting a party $\party_i$.
	\begin{itemize}
		\item When $\advA$ instructs the party $\party_i$ to register, $\party_i$ generates $(\pk_i,\sk_i)$ invoking $\Fsig$ and sends the message $(\reg,\party_i,\pk_i,\attr_i)$ to $\advS$ via $\Fsmt$. $\advS$ writes $(\regsc,\party_i)$ to the input tape of $\Fsc$. $\advS$ generates the signature $\sgn_i$ simulating the algorithm of $\Reg$ using the private key $\sk_R^*$. $\advS$ stores $(\party_i,\pk_i,\attr_i,\sgn_i)$ and sends $(\regok,\allowbreak\party_i,\allowbreak\pk_i,\allowbreak\attr_i,\allowbreak\sgn_i)$ via $\Fsmt$ to $\party_i$.
		When $\advA$ instructs the party $\party_i$ to request an \AFS, $\party_i$ inform $\DevIss$. If $\DevIss$ has an available device (i.e., a created device not yet sold), $\DevIss$ send message $(\handff,\DevIss,\party_i)$ to $\Fff$ to inform the functionality of the handover and gives the device to $\party_i$. Otherwise, $\DevIss$ sends message $(\initff,\party_i)$ to $\Fff$ to create a new device and then performs the handover operation.
		
		\item When $\advA$ instructs the party $\party_i$ to create a new production lot, $\party_i$ sends the transaction $\tx=((\creat,\itemid,\categid,\fieldid),\sgn,(\party_i,\allowbreak\pk_i,\\\attr_i,\sgn_i))$ via $\Fledg$. $\advS$ writes $(\creatsc,\party_i,\allowbreak\itemid,\categid,\fieldid)$ to the input tape of $\Fsc$.
		
		\item When $\advA$ instructs the party $\party_i$ to create a new production area, $\party_i$ sends the transaction $\tx=((\farm,\fieldid),\sgn,(\party_i,\pk_i,\attr_i,\sgn_i))$ via $\Fledg$. $\advS$ writes $(\farmsc,\allowbreak\party_i,\fieldid)$ to the input tape of $\Fsc$.
		
		\item When $\advA$ instructs the party $\party_i$ to create a new transformed product, $\party_i$ sends the transaction $\tx=((\transform,\itemid,\items,\categid),\allowbreak\sgn,(\party_i,\pk_i,\attr_i,\sgn_i))$ via $\Fledg$. $\advS$ writes \\$(\transsc,\allowbreak\itemid,\party_i,\items,\allowbreak\categid)$ to the input tape of $\Fsc$.
		
		\item When $\advA$ instructs the party $\party_i$ to create a new training activity, $\party_i$ sends message $(\trainff,\categid)$ to $\Fff$ obtaining back the value $\cf$ associated to $\categid$. $\party_i$ sends the transaction $\tx=((\training,\categid,\cf),\sgn,(\party_i,\pk_i,\attr_i,\allowbreak\sgn_i))$ via $\Fledg$. $\advS$ writes $(\trainsc,\party_i,\allowbreak\categid,\cf)$ to the input tape of $\Fsc$.
		
		\item When $\advA$ instructs the party $\party_i$ to create a new audit activity, $\party_i$ sends message $(\evalff,\party_i,\itemid,\allowbreak\categid)$ to $\Fff$ obtaining the result of the audit. $\party_i$ sends the transaction $\tx=((\audit,\itemid,\allowbreak\auddata),\allowbreak\sgn,(\party_i,\pk_i,\attr_i,\sgn_i))$ via $\Fledg$. $\advS$ writes $(\audsc,\party_i,\itemid,\categid,\allowbreak\auddata)$ to the input tape of $\Fsc$.
		
		\item When $\advA$ instructs the party $\party_i$ to merge a set of products, $\party_i$ sends the transaction $\tx=((\merge,\allowbreak\batchid,\products),\sgn,(\party_i,\pk_i,\attr_i,\sgn_i))$ via $\Fledg$. $\advS$ writes $(\mergsc,\party_i,\batchid,\products)$ to the input tape of $\Fsc$.
		
		\item When $\advA$ instructs the party $\party_i$ to split a set of products, $\party_i$ sends the transaction $\tx=(\allowbreak(\scsplit,\batchid),\allowbreak\sgn,(\party_i,\pk_i,\attr_i,\sgn_i))$ via $\Fledg$. $\advS$ writes $(\splitsc,\party_i,\batchid)$ to the input tape of $\Fsc$.
		
		\item When $\advA$ instructs the party $\party_i$ to update a product, $\party_i$ sends the transaction $\tx=((\upd,\p,\allowbreak\newstate),\sgn,(\party_i,\pk_i,\attr_i,\sgn_i))$ via $\Fledg$. $\advS$ writes $(\updsc,\party_i,\p,\allowbreak\newstate)$ to the input tape of $\Fsc$.
		
		\item When $\advA$ instructs the party $\party_i$ to handover a product to a party $\party_j$, $\party_i$ sends the transaction $\tx=((\hand,\p,\party_i,\party_j),\sgn,(\party_i,\pk_i,\attr_i,\sgn_i ))$ via $\Fledg$. $\advS$ writes $(\handsc,\p,\party_i,\party_j)$ to the input tape of $\Fsc$.
	
		\item When $\advA$ instructs the party $\party_i$ to reject a received asset from $\party_j$, $\party_i$ sends the transaction $\tx=((\rejok,\p,\party_i,\party_j),\sgn,(\party_i,\pk_i,\attr_i,\sgn_i))$ via $\Fledg$. $\advS$ writes $(\rejsc,\p,\party_i,\party_j)$ to the input tape of $\Fsc$.

		\item When $\advA$ instructs the party $\party_i$ to accept a received asset from $\party_j$, $\party_i$ sends the transaction $\tx=((\recvok,\p,\party_i,\party_j),\sgn,(\party_i,\pk_i,\attr_i,\sgn_i))$ via $\Fledg$. $\advS$ writes $(\recvsc,\p,\party_i,\party_j)$ to the input tape of $\Fsc$.
	
		\item If $\advA$ instructs a corrupted party $\party$ to read the entire history of an asset $\assetid$, $\advS$ writes $(\readsc,\assetid)$ to the input tape of $\Fsc$.

	\end{itemize}
	\item Simulating corrupted parties interacting with $\Fff$.  Let $\advA$ be an adversary corrupting a party $\party_i$.
	\begin{itemize}
		\item When $\advA$ instructs the party $\party_i$ to perform an audit, $\party_i$ sends message $(\evalff,\party_i,\itm,\allowbreak\categid)$ to $\Fff$. \ignore{$\advS$ writes $(\audscenabl,\itemid,\party_i)$ to the input tape of $\Fsc$.} If $\party_i$ is the owner of $\itemid$, $\advS$ computes the verification algorithm of $\Fff$ and returns $(\evalffok,\allowbreak\res,\devid,\itm,\categid)$ or $(\evalffko,\allowbreak\itm,\categid)$ to $\party_i$.
		\item When $\advA$ instructs the party $\party_i$ to perform an handover of a device, $\party_i$ sends message $(\handff,\party_i,\allowbreak\party_j)$ to $\Fff$ and the handover is performed.
	\end{itemize}
\end{itemize}

We consider a sequence of hybrid experiments (ending with the real
experiment) and argue that each pair of hybrids is computationally close thanks to the properties of the underlying cryptographic primitives.

\begin{itemize}
		\item Hybrid $\HYB_3(\secpar)$: This experiment is identical to $\IDEAL_{\Fsc,\advS,\advD}(\lambda)$.
		\item Hybrid $\HYB_2(\secpar)$: $\HYB_2(\secpar)$ is identical to $\HYB_3(\secpar)$ except that $\advS$ will run the $\Reg$ algorithm to generate the signature key pair $(\pk_R,\sk_R)$ for the registration activities.
		Each party $\party_i$ that wants to use the system needs to generate a signature key pair $(\pk_{i},\sk_{i})$ using $\Fsig$. Whenever a party $\party_i$ asks for registration, $\party_i$ sends $\pk_i$ and $\attr_i$. $\advS$ signs $(\party_i,\pk_i,\attr_i)$ calling $\Fsig$ and obtains the signature $\sgn_i$ that will be used by $\party_i$ together with $\pk_i,\attr_i$ as his credential for the system. $\advS$ stores $(\party_i,\pk_i,\attr_i,\sgn_i)$ in $L$ as identifier of $\party_i$ instead of using $\party_i$.
		\item Hybrid $\HYB_1(\secpar)$: $\HYB_1(\secpar)$ is identical to $\HYB_2(\secpar)$ except that every time that a party $\party_i$ sends a command to the system, $\party_i$ will sign the command using the signature secret key $\sk_i$ and append the signature to the command together with the party identifier.
		\item Hybrid $\HYB_0(\secpar)$: This experiment is identical to $\REAL^{\Fff,\Fledg,\Fsmt,\Fsig}_{\Psc,\advA,\advD}(\secpar)$.
\end{itemize}

\begin{lemma}
	$\{\HYB_3(\secpar)\cind\HYB_2(\secpar)\}_{\secpar\in\NN}$
\end{lemma}
\begin{proof}
	We reduce to unforgeability of the signature scheme. Indeed $\advA$ can try to impersonate a different user forging the signature generated by the registration authority. In this case, $\advA$ knows only $\pk_\Reg$, and tries to generate a $\sgn_i$ for a corrupted party $\party_i$ to play in the system. If $\advD$ distinguishes $\HYB_3(\secpar)$ form $\HYB_2(\secpar)$, we will use $\advD$ to break the unforgeability of the signature scheme. Let $\advC$ be the challenger of the signature scheme and $\advA_{Sig}$ the adversary that wants to break the signature unforgeability:
	\begin{itemize}
		\item $\advC$ generates the signature key pair $(\pk_\advC,\sk_\advC)$ and sends $\pk_\advC$ to $\advA_{Sig}$;
		\item $\advA_{Sig}$ runs the experiment using as public key for the registration authority the value $\pk_\advC$;
		\item if $\advA_{Sig}$ receives a message from a not registered party $\party_i$ containing as party identifier the tuple $(\party_i,\pk_i,\attr_i,\sgn_i)$ and $(verify,\pk_\advC,(\party_i,\pk_i,\attr_i),\sgn_i)$ is true, $\advA_{Sig}$ returns to $\advC$ the values $(\party_i,\allowbreak\pk_i,\attr_i),\sgn_i$.
	\end{itemize}
This breaks the unforgeability property of the signature scheme, but the signature unforgeability can be broken only with negligible probability.

\end{proof}
\begin{lemma}
	$\{\HYB_2(\secpar)\cind\HYB_1(\secpar)\}_{\secpar\in\NN}$
\end{lemma}
\begin{proof}
	The difference between $\HYB_2(\secpar)$ and $\HYB_1(\secpar)$ is that in the transaction stored in $\HYB_2(\secpar)$ the command is signed. An adversary $\advA$ can try to send a command for which $\advA$ has no right forging the signature of another user. $\advA$ tries to send a message as $\party_i$, where $\party_i$ is a registered party, without knowing the $\party_i$ signature secret key $\sk_i$. If the transaction is accepted, it is possible to define an adversary $\advA_{Sig}$ that breaks the signature unforgeability against a challenger $\advC$:
	\begin{itemize}
		\item $\advC$ generates the signature key pair $(\pk_\advC,\sk_\advC)$ and sends $\pk_\advC$ to $\advA_{Sig}$;
		\item $\advA_{Sig}$ registers a party $\party_i$ to the experiment using message $(\reg,\party_i,\pk_\advC,\attr_i)$;
		\item $\advA_{Sig}$ can check if $\advA$ is able to produce a transaction $\tx=(\tx',\sgn,(\party_i,\pk_\advC,\attr_i))$ as user $\party_i$, where $\tx'$ is a command. In this case $\advA_{Sig}$ returns to $\advC$ the values $(\tx',\sgn)$.
	\end{itemize}
This breaks the unforgeability property of the signature scheme, but the signature unforgeability can be broken only with negligible probability.
\end{proof}

\begin{lemma}
	$\{\HYB_1(\secpar)\cind\HYB_0(\secpar)\}_{\secpar\in\NN}$
\end{lemma}
\begin{proof}
The only difference between $\HYB_1(\secpar)$ and $\HYB_0(\secpar)$ is that $\HYB_0(\secpar)$ uses $\Fledg$ to store data instead of using an internal list $L$. The simplified ledger functionality $\Fledg$ permits to every party to append bit strings to an available ledger, and every party can retrieve the current ledger. $\Fledg$ is  local functionality that guarantees that transactions are immediately appended, final, and available to all parties. Then, it is not possible to distinguish between the two different experiments.
\end{proof}
This ends the proof of Theorem~\ref{thm:sec}.
\end{proof}

\subsection{Improvements and Optimizations}\label{sec:improv}
Here we discuss issues and corresponding workarounds of our system system.
\begin{itemize}
	\item Decentralizable registration authority/ \AFS\ issuer. To keep the presentation simple, the protocol formalizes the functionality for the case of a single trusty registration authority. This trust assumption might be relaxed by distributing the registration process, which can be obtained using threshold signatures to sign the messages of the registration process, and secret sharing to share the secrets needed to register parties, etc.  We can relax the trust assumption made on the \AFS\ issuer in Section \ref{sec:threat} via decentralization. When we have several authorized \AFS\ issuers, even if an \AFS\ issuer misbehaves by creating malicious \AFS\ issuers, this behavior can be detected by other honest parties equipped with trusted \AFSs\, as analyzed in Section \ref{subsec:trust_ass}.
	\item Private data. There are cases in which a group of participants needs to store for audit purposes confidential data on-chain. This data must be kept private from other network participants. In a nutshell, a party might want to have separate private data relationships with other parties to keep private its purchasing and charging policies. For example, a wholesaler might want to hide the number of products purchased from each distributor or a manufacturer might want to hide the number of raw materials used per product. At the same time, to avoid counterfeiting, the product tracing system must guarantee that the entities cannot introduce more products than purchased passing them as the originals. Cryptographic techniques (e.g.,  homomorphic encryption, commitment schemes) can be used to guarantee both privacy preservation, public verifiability, and correctness of data in blockchains. In homomorphic encryption the encryption function has some properties that allow combining two encryptions into a third encryption of a value related to the original two, (i.e., the sum), as if such operations had been performed over the cleartext. A commitment scheme, instead, allows a party to commit to a secret value while keeping it hidden to others (hiding), with the ability to reveal the committed value later. It is infeasible for the committer to change the value after it has been committed (binding). The private data are stored off-chain by authorized organizations and only a commitment, (e.g., a hash of that data) is written to the ledgers and visible to every peer in the blockchain. The commitment serves as evidence and can be used for audit purposes. Authorized organizations might decide to share the private data via secure message channels with other parties if they get into a dispute or agreement. The parties can now compute the commitment of the private data and verify if it matches the on-chain commitment, proving that the data existed at a certain point in time.
	\item Credentials management. As explained in Section \ref{sec:entities}, during the party's registration $\Reg$ can assign different attributes as the role. These attributes can be used also to add further constraints on the operations that a party can perform according to the role. In the case of Producers or Manufacturers, they must be able to create new products only belonging to the categories that have been approved by $\Reg$ during the registration. We keep this aspect quite general. We note that further checks should be to verify if the specified location is consistent with the product certified for that user. Similar operations can also be done for other roles.
\end{itemize}

\section{Conclusions}\label{sec:concl}
In this work we have formally proposed 
and analyzed a
system for tracing physical goods leveraging blockchain
technology.
A core component of our system is the possibility of
verifying the correspondence among physical goods and 
their on-chain digital twins. 
Our system is based on a model that can be easily
adapted to accommodate additional features (e.g., secret business).

\section*{Acknowledgements}
We thank Giorgio Ciardella for several
remarks and explanations on real-world food traceability and the
effectiveness of biological fingerprint
scanning. This work has been supported in part by a scholarship funded
by Farzati Tech, in part by the European
Union's Horizon 2020 research and innovation programme under grant agreement No 780477
(project PRIViLEDGE), in part by a scholarship funded by Campania Region, and in part by the European Research Council (ERC) under agreement No 885666 (project PROCONTRA).

\bibliographystyle{splncs04}

\end{document}